\documentclass[11pt]{article}

\usepackage[margin=1in]{geometry}

\usepackage{amsmath,amssymb,amsthm,mathtools}
\usepackage{bm}
\usepackage{mathrsfs}
\usepackage{bbm}
\usepackage{microtype}
\usepackage{authblk}
\usepackage{natbib}
\usepackage[colorlinks=true,citecolor=blue,linkcolor=blue,urlcolor=blue]{hyperref}

\usepackage{caption}
\usepackage{booktabs}
\usepackage{graphicx}
\usepackage{xcolor}

\newenvironment{fitwide}{}{}

\theoremstyle{plain}
\newtheorem{theorem}{Theorem}
\newtheorem{proposition}{Proposition}

\newtheorem{corollary}{Corollary}

\theoremstyle{definition}
\newtheorem{definition}{Definition}
\newtheorem{assumption}{Assumption}

\newcommand{\R}{\mathbb{R}}

\newcommand{\Nzero}{\mathbb{N}_0}
\newcommand{\p}{\mathbb{P}}
\newcommand{\E}{\mathbb{E}}
\newcommand{\Var}{\mathrm{Var}}
\newcommand{\Cov}{\mathrm{Cov}}
\newcommand{\ind}{\mathbbm{1}}

\newcommand{\FN}{F_N}

\newcommand{\pN}{p_N}
\newcommand{\FX}{F_X}
\newcommand{\fX}{f_X}

\newcommand{\CL}{\mathrm{CL}}
\newcommand{\plim}{\overset{p}{\longrightarrow}}
\newcommand{\dto}{\overset{d}{\longrightarrow}}

\newcommand{\Hmat}{\mathcal{H}}
\newcommand{\Jmat}{\mathcal{J}}

\newcommand{\fullparam}{\bm{\vartheta}}
\renewcommand{\d}{\,\mathrm{d}}

\title{Severity estimation in dependent collective risk models}

\author{Christopher Blier-Wong}
\affil{Department of Statistical Sciences, University of Toronto, Canada}

\date{\today}

\begin{document}
\maketitle

\begin{abstract}
The collective risk model represents the aggregate loss of an insurance portfolio as a random sum of individual claim severities. When claim counts and severities are dependent, the claims pooled across policies are no longer a sample from the marginal severity distribution. We show that their empirical distribution converges to the law of an arbitrary observed claim, a size-biased mixture of the conditional severity distributions, so any procedure that fits the severity margin directly to pooled claims is inconsistent in general. The same result identifies the distribution that the pooled claims do sample, and we build a composite likelihood estimation procedure on that distribution. We establish consistency and asymptotic normality, with Godambe information in which the policy, rather than the claim, is the sampling unit. In a Sarmanov collective risk model, the observed-claim density and the aggregate mean are in closed form. A simulation study measures the bias of naive pooled-severity fitting, its correction by the composite likelihood, and the coverage of the policy-level standard errors.
\end{abstract}

\bigskip

\noindent\textbf{Keywords:} collective risk model; frequency--severity dependence; dependent severities; size-biased sampling; composite likelihood; Godambe information; Sarmanov copula; FGM copula; copula inference.

\section{Introduction}\label{sec:intro}

The collective risk model (CRM) is the standard representation of aggregate losses in non-life insurance and risk theory. Over a fixed horizon, say a policy-year, the aggregate loss is
$$S = \sum_{j=1}^{N} X_j,$$
with $S=0$ when $N=0$, where $N\in\Nzero$ is the claim count and $(X_j)_{j\ge 1}$ are claim severities taking values in $\R_+=[0,\infty)$. Pricing, reserving, capital allocation, and solvency assessment all rest on the distribution of $S$ and on risk measures such as its mean, variance, value-at-risk, and expected shortfall. Textbook treatments appear in \citet{KlugmanPanjerWillmot2012} and \citet{KaasGoovaertsDhaeneDenuit2008}, and \citet{DenuitDhaeneGoovaertsKaas2005} survey dependence concepts in actuarial science. This paper concerns the estimation of CRM parameters from portfolio data.

The classical CRM assumes independence: $N$ is independent of the severity sequence, and given $N=n$ the severities $(X_1,\dots,X_n)$ are i.i.d.\ with common distribution $\FX$. Independence makes estimation modular: one fits the count distribution $\FN$ to the claim counts, the severity distribution $\FX$ to the observed claims, and combines the two into a model for $S$. Real portfolios violate both assumptions. Claim frequency and severity are typically negatively dependent, and severities within a policy tend to be positively dependent among themselves \citep{GarridoGenestSchulz2016,shi2020regression}. Under dependence, the severity distribution may vary with the realized count, and the claims within a policy-year may be mutually dependent. Either feature invalidates the modular two-step procedure.

Table~\ref{tab:example} shows the data structure and illustrates the estimation problem. The five rows are simulated from a CRM in which frequency and severity are positively dependent. Policies~1 and~4 report no claims and contribute nothing to the pool of observed severities; policy~2 contributes one claim, and policies~3 and~5 contribute two claims each. The standard workflow pools the five claim amounts and treats them as an i.i.d.\ sample from $\FX$. Under dependence, this pooling introduces a systematic bias. Policies with many claims are over-represented in the pool, and when high-count policies also produce larger claims, the pooled sample shifts away from $\FX$.

\begin{table}[ht]
\centering
\caption{Five policy-year observations from a collective risk model. Each policy reveals its claim count $N_i$ and, when $N_i\ge 1$, the individual claim amounts.}
\label{tab:example}
\begin{tabular}{ccccc}
\toprule
Policy $i$ & $N_i$ & $X_{i1}$ & $X_{i2}$ & $S_i$ \\
\midrule
1 & 0 & -- & -- & 0 \\
2 & 1 & 17.96 & -- & 17.96 \\
3 & 2 & 18.50 & 94.51 & 113.01 \\
4 & 0 & -- & -- & 0 \\
5 & 2 & 67.71 & 94.44 & 162.15 \\
\bottomrule
\end{tabular}
\end{table}

Figure~\ref{fig:sizebias} shows the shift on a portfolio of one million simulated policies. The counts are Poisson with mean $1/2$, the severities are exponential with mean $100$, and the count and each individual severity are positively dependent, with Spearman correlation $0.24$; Section~\ref{sec:sarmanovmodel} details the construction. The empirical distribution of the pooled claims lies well below the true severity distribution $\FX$ over a wide range: pooled claims are stochastically larger than the marginal severity law predicts. A practitioner fitting $\FX$ to these pooled claims would overestimate the severity mean by roughly $25\%$. The dashed blue curve is the distribution of an \emph{arbitrary observed claim} implied by the model, the central object of this paper. It tracks the pooled empirical distribution exactly.

\begin{figure}[ht]
\centering
\includegraphics[width=0.6\textwidth]{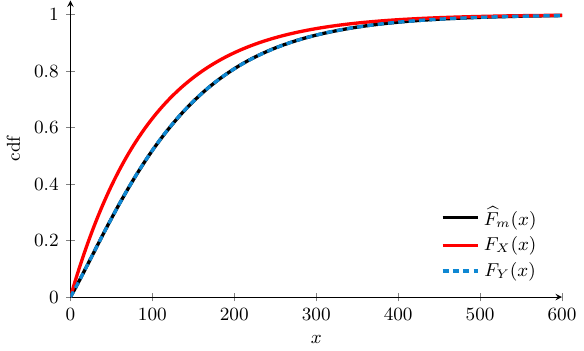}
\caption{Empirical cdf $\widehat{F}_m$ of the pooled claim severities (black), true marginal severity cdf $\FX$ (red), and observed-claim cdf $F_Y$ (dashed blue) for $m=1\,000\,000$ policies simulated from a CRM with Poisson$(1/2)$ counts, exponential severities with mean $100$, and positive frequency--severity dependence from an FGM copula with $\theta_{01}=0.72$, generated from a projectively coherent construction with within-policy severity dependence. The pooled claims track $F_Y$, not $\FX$.}
\label{fig:sizebias}
\end{figure}

The phenomenon in Figure~\ref{fig:sizebias} is a form of size-biased sampling: a policy that reports more claims contributes more of them to the pool, so the pool over-weights the severity behaviour of high-count policies. Our first result (Theorem~\ref{thm:FYlimit}) makes this precise. The empirical measure of the pooled claims converges almost surely to the law of an arbitrary observed claim $Y$, whose cdf is the size-biased mixture
$$F_Y(x)=\sum_{n=1}^{\infty}\frac{n\,\p(N=n)}{\E[N]}\,F_{X\mid N=n}(x),\qquad x\ge 0,$$
of the conditional severity laws. In the independent CRM, $F_{X\mid N=n}=\FX$ for every $n\ge 1$, the mixture collapses to $\FX$, so fitting the severity margin to the pooled claims recovers the true severity distribution. Under dependence, $F_Y\ne\FX$ in general. We call an approach margin-first when it proceeds in two stages: it first fits the severity margin to the uncorrected pooled claims, and then estimates the dependence parameters with the margin held fixed. Maximum likelihood on the pooled claims, inference functions for margins (IFM), and semiparametric rank-based copula methods all follow this pattern. Any such procedure converges to the parameter of the severity distribution closest to $F_Y$ in the Kullback--Leibler sense, rather than to the true severity parameter (Corollary~\ref{cor:naiveMLE}). 

Having identified the data-generating mechanism, we can also use it to design a consistent estimation procedure. The pooled claims sample exactly the law $F_Y$, so whenever the model implies a tractable form for $F_Y$, they provide a valid likelihood component for the severity and dependence parameters. We combine this component with the count likelihood into a composite likelihood, that is, an estimation criterion assembled from low-dimensional components of the model rather than from the full joint density \citep{Lindsay1988,VarinReidFirth2011}, and we prove consistency and asymptotic normality of the resulting estimator (Theorem~\ref{thm:CLasymp}). Each policy is the sampling unit, and it contributes a whole group of claims at once: a random number of them, determined by its claim count, and mutually dependent, because they are generated under the same policy. We call this group the policy's cluster of claims. Because the claims within a cluster are dependent, the usual equality between the variance of the score and the expected curvature of the criterion fails. The asymptotic variance is instead the inverse Godambe information, a sandwich of the score variance and the expected curvature \citep{Godambe1960,GodambeHeyde1987}, and its validity requires that scores be aggregated at the policy level. Aggregating at the claim level understates the variance whenever within-policy severities are dependent. A stepwise extension estimates higher-order dependence parameters from low-dimensional conditional margins, with a stacked sandwich that propagates the plug-in uncertainty of the earlier steps.

The composite likelihood needs a model in which $F_Y$ is explicit, and the Sarmanov CRM is one such case. In this model, a Sarmanov copula \citep{Sarmanov1966} generates the dependence between $N$ and the severity sequence in every finite dimension. The family generalizes the FGM copula and admits a latent Bernoulli-mixture representation \citep{blier-wong2026stochastic} that yields closed-form low-order margins, thereby yielding a closed-form observed-claim density $f_Y$, together with admissibility constraints expressed through the nonnegativity of a Bernoulli probability mass function. We illustrate the procedure through a simulation study: naive pooled fitting carries a bias that does not vanish with the sample size, the composite likelihood removes it, and the policy-level sandwich attains nominal coverage while its claim-level counterpart undercovers.

This paper sits at the intersection of three strands of literature in statistics and actuarial science. The first is copula-based CRMs. Early models captured dependence between the claim count and the average severity per policy; see \citet{FreesValdez2008,CzadoEtAl2012,KramerEtAl2013,ShiFengIvantsova2015,GarridoGenestSchulz2016}. Special cases with Sarmanov distributions appear in \citet{BolanceVernic2020,VernicBolanceAlemany2022}. Modelling the average severity is adequate when the target is the mean of $S$. It is not enough for the distribution of $S$ or of the individual claims, which requires the law of each severity rather than a single number per policy. Estimation in that class uses one severity value per policy, typically the average or total claim amount, so the pooling problem studied here does not arise. More recent constructions allow dependence between the count and the individual severities, as well as among the severities themselves \citep{LiuWang2017,CossetteEtAl2019,OhAhnLee2021,blier-wong2024collective}. In the copula-linked compound regressions of \citet{shi2020regression}, the authors allow for dependence between the count and the severities but treat the individual claims as conditionally independent given the count, whereas claims within a policy are typically positively dependent. Parameter estimation in this second class has remained open, because each policy's likelihood involves a copula whose dimension is the random claim count. The second strand is size-biased and weighted sampling. The pooled-claims limit law is a special case of weighted distributions \citep{Rao1965,PatilRao1978}, for which \citet{Vardi1982,Vardi1985} developed nonparametric maximum likelihood, with asymptotic theory in \citet{GillVardiWellner1988}. See also \citet{ArratiaGoldsteinKochman2019} for a survey of size-biased transforms. In actuarial science, size-biased transforms arise prominently in conditional mean risk sharing \citep{Denuit2019}. The third strand is composite likelihood estimation in insurance. \citet{ShiFengBoucher2016} and \citet{YangShi2019} use pairwise likelihood in copula-based insurance models to avoid high-dimensional integration, and \citet{CossetteGadouryMarceauRobert2019} apply it to hierarchical Archimedean copulas. To the best of our knowledge, composite likelihood has not previously been used to estimate a CRM, and the size-biased structure of the pooled claims has not been recognized as the relevant sampling mechanism.

The remainder of this paper is structured as follows. Section~\ref{sec:background} sets up the dependent CRM and proves the convergence of the pooled claims to the size-biased law $F_Y$. Section~\ref{sec:cl} develops the composite likelihood and its policy-level Godambe information, together with the stepwise extension and hypothesis tests. Section~\ref{sec:sarmanovmodel} constructs the Sarmanov CRM and derives its closed-form observed-claim law. Section~\ref{sec:sarmanovest} specializes the estimation theory to the Sarmanov CRM and discusses implementation. Section~\ref{sec:numericalstudy} presents the simulation study, and Section~\ref{sec:conclusion} provides the conclusion. 

\section{Pooled claims sample a size-biased law}\label{sec:background}

This section establishes the problem that the rest of the paper solves. We define the class of dependent CRMs, explain why their full likelihood is intractable, prove that the pooled claims sample the size-biased law $F_Y$, and derive the resulting inconsistency of margin-first estimation.

\subsection{Collective risk models with dependent components}\label{subsec:model}

Let $N$ have cdf $\FN(n)=\p(N\le n)$ and pmf $\pN(n)=\p(N=n)$ on $\Nzero$, and let $X$ have continuous cdf $\FX$ on $\R_+$ with density $\fX$; extensions to discrete or mixed severities require only minor modifications. Following \citet{CossetteEtAl2019,OhAhnLee2021}, a CRM with dependence between the count and the individual severities specifies the joint distribution of $(N,X_1,\dots,X_k)$ for every $k\ge 1$. We express it through a sequence of copulas: for each $k\ge 1$,
\begin{equation}\label{eq:sklarMixed}
\p(N\le n, X_1\le x_1,\dots,X_k\le x_k) = C_{k+1}\left(\FN(n),\FX(x_1),\dots,\FX(x_k)\right),
\end{equation}
for a copula $C_{k+1}$ on $[0,1]^{k+1}$, which may depend on a finite-dimensional parameter vector to be estimated. Sklar's theorem \citep{Sklar1959} guarantees the existence of such a copula for each $k\ge 1$. Because $N$ is discrete, $C_{k+1}$ is not unique off the range of the margins; this is the usual subcopula phenomenon, see \citet[Chapter~2]{Nelsen2006} and \citet{GenestNeslehova2007}. The joint probabilities determined on the support are nonetheless well defined. The family \eqref{eq:sklarMixed} must also be internally consistent, so that the copulas $(C_{k+1})_{k\ge 1}$ describe a single law for the whole sequence $(N,X_1,X_2,\dots)$. For every $k\ge 1$, dropping the last severity from $(N,X_1,\dots,X_{k+1})$ must return the distribution of $(N,X_1,\dots,X_k)$. Since $\FX(x_{k+1})\to1$ as $x_{k+1}\to\infty$, this marginalization sets the last severity argument of $C_{k+2}$ to one, so consistency requires $C_{k+2}(u_0,u_1,\dots,u_k,1)=C_{k+1}(u_0,u_1,\dots,u_k)$ for every $k\ge 1$. We impose this only at the arguments the margins actually attain, since $N$ is discrete and the copula is identified nowhere else. We treat this Kolmogorov consistency as part of the model specification throughout; in the Sarmanov class, this consistency constrains the model parameters.

A policy-year observation is the variable-length vector $Z=(N, X_1,\dots,X_N)$, and the portfolio supplies i.i.d.\ copies $Z_1,\dots,Z_m$. The observation scheme is the central feature of the data. A policy reveals only its first $N$ severities, so a policy with $N=0$ contributes no severities and a policy with a large count contributes many. Under independence, this is harmless, because every observed severity has distribution $\FX$ regardless of the count that produced it. Under dependence, the conditional law of a severity varies with $N$, and the pool of observed claims mixes those conditional laws in proportion to the counts.

A model of the form \eqref{eq:sklarMixed} can be parameter-rich. In the FGM class, the $(k+1)$-variate copula has $2^{k+1}-(k+1)-1$ interaction parameters, one for each subset of coordinates of size at least two, and Sarmanov-type expansions behave similarly. An unstructured specification thus has exponentially many parameters in the maximal claim count. We therefore impose the following symmetry: within a policy, the order in which the claims are indexed carries no information about their sizes.

\begin{assumption}[Conditional exchangeability]\label{ass:exchangeability}
For each $n\ge 1$, conditional on $N=n$, the vector $(X_1,\dots,X_n)$ is exchangeable:
$$(X_1,\dots,X_n)\mid (N=n)\stackrel{d}{=}(X_{\sigma(1)},\dots,X_{\sigma(n)})\mid (N=n)$$
for every permutation $\sigma$ of $\{1,\dots,n\}$. The copulas $(C_{k+1})_{k\ge 1}$ are exchangeable in the severity coordinates, so the dependence parameters linking $N$ to an individual severity do not depend on the claim index.
\end{assumption}

Exchangeability of the copulas $C_{k+1}$ in their severity coordinates implies that the severity vector $(X_1,\dots,X_n)$ is exchangeable given $N=n$; we state both forms of Assumption~\ref{ass:exchangeability} because later arguments use each one. Assumption~\ref{ass:exchangeability} excludes mechanisms in which claim order is informative, such as first-claim scrutiny, deductible erosion, or reporting-order effects; those settings require order-specific margins in place of the exchangeable reduction. Under exchangeability, the dependence structure reduces to a small collection of parameters indexed by subset size: a single frequency--severity parameter linking $N$ to any severity, a single severity--severity parameter, a three-way parameter linking $N$ to a pair of severities, and so on. The low-order parameters govern the quantities that matter more in practice. The parameters entering the law of an observed claim, and hence $\E[S]$, are low-order, whereas parameters such as $\theta_{123}$ affect the variance and tail but are harder to estimate from typical portfolios. Whatever the ultimate target, an accurate estimate of the marginal behaviour under the size-biased observation scheme is a prerequisite.

\subsection{Intractability of the full likelihood}\label{subsec:fulllik}

The statistically natural estimator maximizes the full likelihood of the policy vectors $Z_i=(N_i,X_{i1},\dots,X_{iN_i})$, fitting the frequency, severity, and dependence parameters jointly. Were it tractable, it would be the preferred approach. However, as seen from the contribution of a single policy under \eqref{eq:sklarMixed}, this approach will be intractable. We fix $k\ge 1$, write $v_n=\FN(n)$ and $v_{n^-}=\FN(n-1)$ (with $v_{-1}:=0$), let $u_j=\FX(x_j)$, and define the mixed derivative operator
$$D_{1:k}C_{k+1}(u_0,u_1,\dots,u_k) := \frac{\partial^k}{\partial u_1\cdots \partial u_k} C_{k+1}(u_0,u_1,\dots,u_k).$$
Assume that for each $n$ with $\pN(n)>0$ the conditional law of $(X_1,\dots,X_k)$ given $N=n$ is absolutely continuous with respect to Lebesgue measure on $\R_+^k$; then the finite difference $C_{k+1}(v_n,\cdot)-C_{k+1}(v_{n^-},\cdot)$ is the integral of its almost-everywhere mixed derivative, and the joint pmf--pdf of $(N,X_1,\dots,X_k)$ is
\begin{equation}\label{eq:mixedlik}
f_{N,X_1,\dots,X_k}(n,x_1,\dots,x_k) = \left[D_{1:k}C_{k+1}(v_n,u_1,\dots,u_k)-D_{1:k}C_{k+1}(v_{n^-},u_1,\dots,u_k)\right]\prod_{j=1}^k \fX(x_j).
\end{equation}
The absolute-continuity requirement holds automatically for the polynomial expansions of Sections~\ref{sec:sarmanovmodel} and \ref{sec:sarmanovest} and must be checked family by family beyond them. The finite difference in the first argument distinguishes \eqref{eq:mixedlik} from the familiar continuous copula density; it is unavoidable when $N$ is discrete.

The full likelihood of a policy with $n$ claims evaluates \eqref{eq:mixedlik} at $k=n$. It therefore requires an $n$th-order mixed derivative of an $(n+1)$-dimensional copula, one such term for every observed count, and in subset-expansion families the number of interaction parameters grows exponentially with $n$. Unless the counts stay small and the high-dimensional copula derivatives are available in closed form, direct maximum likelihood is impractical. Existing estimation methods, such as the method of moments, IFM, and composite likelihood variants tailored to specific families or to low claim counts, avoid this computation by fitting $\FX$ to the pooled severities. We now show that this shortcut estimates the wrong distribution, by identifying the law the pooled claims actually sample.

Let $Z_1,\dots,Z_m$ be i.i.d.\ policy observations, $Z_i=(N_i,X_{i1},\dots,X_{iN_i})$, where $i=1,\dots,m$ indexes policies and $j=1,\dots,N_i$ indexes the claims of policy $i$, with $Z_i=(N_i)$ carrying no severities when $N_i=0$; this indexing convention is used throughout. Let
$$N_\bullet = \sum_{i=1}^m N_i$$
be the total number of observed claims. The pooled empirical measure and cdf of the claim severities are
$$\widehat\mu_m(A) = \frac{1}{N_\bullet}\sum_{i=1}^m\sum_{j=1}^{N_i}\ind\{X_{ij}\in A\}, \qquad A\in\mathcal{B}(\R_+),$$
and
$$\widehat{F}_{m}(x) = \frac{1}{N_\bullet}\sum_{i=1}^m\sum_{j=1}^{N_i}\ind\{X_{ij}\le x\}, \quad x\in\R_+,$$
where $\mathcal{B}(\R_+)$ denotes the Borel $\sigma$-algebra on $\R_+$, with an arbitrary convention when $N_\bullet=0$.

The limiting law of the pool is the probability measure of an arbitrary observed claim,
\begin{equation}\label{eq:FYmeasure}
\mu_Y(A) := \frac{\E\left[\sum_{j=1}^{N}\ind\{X_j\in A\}\right]}{\E[N]}, \qquad A\in\mathcal{B}(\R_+),
\end{equation}
with cdf $F_Y(x)=\mu_Y([0,x])$. The name reflects the following sampling experiment: select one claim uniformly at random among all claims of a large portfolio; $\mu_Y$ is its law. A policy with $n$ claims is selected with probability proportional to $n$, so high-frequency policies are over-represented, and claim-free policies do not appear. The next theorem shows that the pooled claims converge to $\mu_Y$, thereby identifying the law of the pooled claims sample and, with it, the data-generating mechanism on which the estimation procedure in Section~\ref{sec:cl} is built.

\begin{theorem}\label{thm:FYlimit}
Assume $\E[N]\in(0,\infty)$. Then $\widehat\mu_m\Rightarrow \mu_Y$ almost surely; in particular, for every continuity point $x$ of $F_Y$,
$$\widehat{F}_{m}(x)\longrightarrow F_Y(x),\quad\text{a.s.}$$
Under Assumption~\ref{ass:exchangeability},
\begin{equation}\label{eq:FYmixture}
F_Y(x) = \sum_{n=1}^\infty \frac{n\pN(n)}{\E[N]}F_{X\mid N=n}(x),
\end{equation}
where, for $n\ge 1$, $F_{X\mid N=n}$ is the common conditional cdf of $X_1,\dots,X_n$ given $N=n$.
\end{theorem}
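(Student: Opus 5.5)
The plan is to write $\widehat\mu_m$ as a ratio of two i.i.d.\ averages over policies and apply the strong law of large numbers to numerator and denominator separately. For a fixed Borel set $A$, define the policy-level count $G_i(A)=\sum_{j=1}^{N_i}\ind\{X_{ij}\in A\}$. Then
$$\widehat\mu_m(A)=\frac{m^{-1}\sum_{i=1}^m G_i(A)}{m^{-1}\sum_{i=1}^m N_i}.$$
The $G_i(A)$ are i.i.d., because they are measurable functions of the i.i.d.\ $Z_i$, and they satisfy $0\le G_i(A)\le N_i$. Hence $\E[G_i(A)]\le\E[N]<\infty$. The strong law gives convergence of the numerator to $\E[\sum_{j\le N}\ind\{X_j\in A\}]$ and of the denominator to $\E[N]>0$, both almost surely. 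Since $\E[N]>0$, we have $N_\bullet\ge1$ for all large $m$ almost surely, so the arbitrary convention at $N_\bullet=0$ is eventually irrelevant. It follows that $\widehat\mu_m(A)\to\mu_Y(A)$ almost surely for each fixed $A$. It is also immediate from the definition that $\mu_Y$ is a probability measure: it is countably additive by monotone convergence, and $\mu_Y(\R_+)=1$.

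The step I expect to be the main obstacle is upgrading setwise almost-sure convergence to almost-sure weak convergence, since the null set could a priori depend on $A$. I would take a countable family: the sets $A_q=[0,q]$ with $q$ rational, together with the sets $[0,x]$ for the (at most countably many) atoms $x$ of $F_Y$. Intersecting the corresponding countably many probability-one events gives a single event of probability one on which $\widehat F_m(q)\to F_Y(q)$ for all rational $q\ge0$. On this event, fix a continuity point $x$ of $F_Y$. Choose rationals $q<x<q'$ with $F_Y(q')-F_Y(q)<\varepsilon$, and use monotonicity of $\widehat F_m$ to sandwich $\widehat F_m(x)$ between $\widehat F_m(q)$ and $\widehat F_m(q')$. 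Letting $m\to\infty$ and then $\varepsilon\to0$ gives $\widehat F_m(x)\to F_Y(x)$. Convergence of cdfs at every continuity point is equivalent to $\widehat\mu_m\Rightarrow\mu_Y$ on $\R_+$ (tightness is automatic because the limit $F_Y$ is a proper cdf). This proves both claims of the first part.

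For the mixture representation \eqref{eq:FYmixture}, I would condition on $N$. By the tower property and nonnegativity (Tonelli),
$$\E\Big[\sum_{j=1}^N\ind\{X_j\le x\}\Big]=\sum_{n=1}^\infty \pN(n)\sum_{j=1}^n\p(X_j\le x\mid N=n).$$
Under Assumption~\ref{ass:exchangeability}, the $n$ conditional probabilities $\p(X_j\le x\mid N=n)$ coincide, so each equals $F_{X\mid N=n}(x)$ and the inner sum is $nF_{X\mid N=n}(x)$. The $n=0$ term vanishes. Dividing by $\E[N]$ yields \eqref{eq:FYmixture}.

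The interchange of sum and expectation causes no difficulty because all terms are nonnegative. The conditional cdfs are well defined only for $n$ with $\pN(n)>0$; the remaining $n$ contribute zero, so any convention for them is harmless.
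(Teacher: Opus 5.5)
Your proof is correct and matches the paper's argument: you write $\widehat\mu_m(A)$ as a ratio of two i.i.d.\ policy-level averages and apply the strong law to each, use one countable family of half-lines to get a single probability-one event, and sandwich by monotonicity to reach every continuity point; the mixture follows by conditioning on $N$ and using exchangeability. The differences are cosmetic. The paper uses a countable dense set of continuity points where you use the rationals plus the atoms, and it adds that $F_Y$ is continuous here (because $\FX$ is continuous), so convergence holds at every $x\ge 0$; your inclusion of the atoms already gives that conclusion directly.
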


\begin{proof}
For a fixed Borel set $A$, write $\widehat\mu_m(A)=A_m(A)/B_m$, where
$$A_m(A)=\sum_{i=1}^m\sum_{j=1}^{N_i}\ind\{X_{ij}\in A\}, \qquad B_m=\sum_{i=1}^m N_i.$$
Both are sums of i.i.d.\ policy-level contributions with finite expectations, so the strong law of large numbers gives $A_m(A)/m\to \E[\sum_{j=1}^{N}\ind\{X_j\in A\}]$ and $B_m/m\to \E[N]$ a.s. Applying this to the sets $A=[0,x]$ for $x$ in a countable dense set $D\subset\R_+$ of continuity points of $F_Y$ gives, on a single probability-one event, $\widehat{F}_m(x)\to F_Y(x)$ for all $x\in D$; monotonicity of $\widehat{F}_m$ and $F_Y$ extends the convergence to every continuity point of $F_Y$, which is equivalent to $\widehat\mu_m\Rightarrow \mu_Y$ on that event. The mixture form \eqref{eq:FYmixture} follows by conditioning on $N=n$ and applying exchangeability. Finally, each conditional cdf $F_{X\mid N=n}(x)=\{C_2(\FN(n),\FX(x))-C_2(\FN(n-1),\FX(x))\}/\pN(n)$ is continuous in $x$, since $\FX$ is continuous and copulas are Lipschitz in each coordinate; hence $F_Y$ is continuous, its set of continuity points is all of $\R_+$, and the convergence holds at every $x\ge 0$.
\end{proof}

The identity \eqref{eq:FYmixture} shows exactly when pooling is harmless. In the independent CRM, $F_{X\mid N=n}=\FX$ for all $n\ge 1$, the mixture collapses, and $F_Y=\FX$: the classical workflow is consistent. Under dependence, both the weights $n\pN(n)/\E[N]$, $n\ge 1$, and the components $F_{X\mid N=n}$, $n\ge 1$, pull $F_Y$ away from $\FX$. The limit is a special case of length-biased or weighted sampling; see \citet{Rao1965,PatilRao1978,Vardi1982,Vardi1985,GillVardiWellner1988,ArratiaGoldsteinKochman2019}.

A consequence used repeatedly below expresses claim-level sums as integrals against $F_Y$.

\begin{corollary}\label{cor:FYgeneral}
Assume the conditions of Theorem~\ref{thm:FYlimit} and let $h:\R_+\to\R$ be measurable with $\E\left[\sum_{j=1}^N |h(X_j)|\right]<\infty$. Then
$$\E\left[\sum_{j=1}^N h(X_j)\right] = \E[N]\int_0^\infty h(x)\d F_Y(x).$$
Under Assumption~\ref{ass:exchangeability}, the integrability condition reduces to $\E[N\,|h(X_1)|]<\infty$, and
$$\E\left[\sum_{j=1}^N h(X_j)\right] = \sum_{n=1}^\infty n\pN(n)\int_0^\infty h(x)\d F_{X\mid N=n}(x).$$
\end{corollary}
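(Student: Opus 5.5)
The plan is to prove the first identity for indicators, extend it to general $h$ by the standard measure-theoretic ladder, and then obtain the second form by conditioning on $N$ and using exchangeability.

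\textbf{Step 1: indicators.} By the definition \eqref{eq:FYmeasure}, for every Borel $A$,
$$\E\Bigl[\sum_{j=1}^N \ind_A(X_j)\Bigr]=\E[N]\,\mu_Y(A).$$
This is the claimed identity for $h=\ind_A$. Linearity then gives it for nonnegative simple functions.

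\textbf{Step 2: nonnegative measurable $h$.} Take nonnegative simple functions $h_k\uparrow h$. The random sums $\sum_{j=1}^N h_k(X_j)$ increase pointwise to $\sum_{j=1}^N h(X_j)$, because $N$ is finite almost surely (as $\E[N]<\infty$). Monotone convergence on the left, applied to the random sum, and monotone convergence against $\mu_Y$ on the right, give the identity for all measurable $h\ge 0$, possibly with both sides infinite.

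\textbf{Step 3: general $h$.} Write $h=h^+-h^-$. The hypothesis $\E[\sum_j |h(X_j)|]<\infty$ makes both the $h^+$ and $h^-$ identities finite, so subtracting them is legitimate. This step uses no exchangeability and needs no further integrability beyond the stated hypothesis.

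\textbf{Step 4: the exchangeable form.} For $h\ge 0$ condition on $N$ and use Tonelli:
$$\E\Bigl[\sum_{j=1}^N h(X_j)\Bigr]=\sum_{n\ge1}\pN(n)\sum_{j=1}^n\E[h(X_j)\mid N=n].$$
Under Assumption~\ref{ass:exchangeability}, each $X_j$ given $N=n$ has the common law $F_{X\mid N=n}$, so the inner sum equals $n\int h\,\d F_{X\mid N=n}$. This gives the second display for $h\ge0$.

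Applying the same computation to $|h|$ yields
$$\E\Bigl[\sum_{j=1}^N |h(X_j)|\Bigr]=\sum_{n\ge1} n\,\pN(n)\,\E[|h(X_1)|\mid N=n]=\E\bigl[N\,|h(X_1)|\bigr].$$
The last equality holds because on $\{N=0\}$ the factor $N$ is zero, and for $n\ge1$ the variable $X_1$ is among the observed claims. This proves that the integrability condition reduces to $\E[N|h(X_1)|]<\infty$. The signed version of the second display then follows by splitting $h=h^+-h^-$ as in Step 3. Consistently with \eqref{eq:FYmixture}, the result equals $\E[N]\int h\,\d F_Y$.

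\textbf{Main obstacle.} The only delicate point is the interchange of expectation and a sum with a random number of terms. Monotone convergence and Tonelli for nonnegative terms handle this cleanly, provided the argument passes through $|h|$ before splitting into positive and negative parts. A secondary point is interpreting $\E[N|h(X_1)|]$ on $\{N=0\}$: the product is zero there, whatever $X_1$ is, since $X_1$ still exists in the sequence model. I expect no genuine difficulty beyond this.
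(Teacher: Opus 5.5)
Your proposal is correct and follows the paper's proof essentially step for step. Both arguments take indicators from the definition of $\mu_Y$, extend by monotone convergence, split into positive and negative parts, and then condition on $N$ and use exchangeability, first with $|h|$ for the integrability reduction and then with $h$. Your added care in routing the signed case through $|h|$, and in handling the event $\{N=0\}$, only makes explicit what the paper's short proof leaves implicit.
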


\begin{proof}
The first identity is the integral form of the definition \eqref{eq:FYmeasure} of $\mu_Y$: it holds for indicator functions $h=\ind_A$, extends to nonnegative measurable $h$ by monotone convergence, and to integrable signed $h$ by decomposition into positive and negative parts. Under conditional exchangeability, $\E[\sum_{j=1}^N |h(X_j)|]=\sum_{n\ge 1}n\pN(n)\E[|h(X_1)|\mid N=n]=\E[N\,|h(X_1)|]$, which gives the reduced integrability condition; the same conditioning with $h$ in place of $|h|$ yields the mixture identity $\E[\sum_{j=1}^N h(X_j)]=\sum_{n\ge 1}n\pN(n)\int_0^\infty h(x)\d F_{X\mid N=n}(x)$.
\end{proof}

\subsection{The inconsistency of naive pooled fitting}\label{subsec:naive}

By Theorem~\ref{thm:FYlimit}, any procedure that fits a severity model to the pooled claims estimates $F_Y$, not $\FX$. The next corollary makes this concrete when the severity margin is fitted parametrically.

\begin{corollary}\label{cor:naiveMLE}
Consider the margin-first estimator that fits a parametric density $\fX(\cdot;\psi)$ to the pooled claims by maximizing
$$L_m^{\mathrm{naive}}(\psi)=\prod_{i=1}^m\prod_{j=1}^{N_i}\fX(X_{ij};\psi),$$
treating the pooled severities as a sample from $\FX(\cdot;\psi)$. Under standard M-estimation regularity conditions, in the sense of \citet[Chapters~5 and 19]{vanDerVaart1998}, any maximizer converges almost surely to
$$\psi^\star = \operatorname*{arg\,max}_{\psi\in\Psi}\int \log \fX(x;\psi)\,\d F_Y(x),$$
the Kullback--Leibler projection of $F_Y$ onto the fitted severity family. The estimator is consistent for the true value $\psi_0$ exactly when $\psi^\star=\psi_0$. When $F_Y\ne\FX(\cdot;\psi_0)$, the projection generically differs from $\psi_0$, so the margin-first estimator is inconsistent for the marginal severity law.
\end{corollary}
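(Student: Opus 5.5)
The plan is to recast the naive pooled log-likelihood as a ratio of two policy-level i.i.d.\ averages, so that its limit follows from the strong law and Corollary~\ref{cor:FYgeneral}, and then to invoke the standard argmax consistency theorem. Write
$$\frac{1}{N_\bullet}\log L_m^{\mathrm{naive}}(\psi)=\frac{m}{N_\bullet}\cdot\frac1m\sum_{i=1}^m g(Z_i;\psi),\qquad g(Z;\psi):=\sum_{j=1}^{N}\log\fX(X_j;\psi).$$
Dividing by $N_\bullet$ rescales the criterion by a positive random constant, so maximizers are unchanged. The summands $g(Z_i;\psi)$ are i.i.d.\ across policies, even though the claims within a policy are dependent.

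\textbf{Pointwise limit.} For each fixed $\psi$, assume the integrability condition $\E[\sum_{j\le N}|\log\fX(X_j;\psi)|]<\infty$. The strong law then gives $m^{-1}\sum_i g(Z_i;\psi)\to\E[g(Z;\psi)]$ a.s., and Corollary~\ref{cor:FYgeneral} identifies this limit as $\E[N]\int\log\fX(x;\psi)\d F_Y(x)$. Since $N_\bullet/m\to\E[N]\in(0,\infty)$ a.s., the normalized criterion satisfies
$$M_m(\psi):=N_\bullet^{-1}\log L_m^{\mathrm{naive}}(\psi)\to M(\psi):=\int\log\fX(x;\psi)\d F_Y(x)\quad\text{a.s.}$$

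\textbf{Uniform convergence and argmax.} The regularity conditions of \citet[Ch.~5, 19]{vanDerVaart1998} are what upgrade pointwise convergence to consistency of the maximizer. I would take them in the following form:
\begin{itemize}
\item $\Psi$ is compact, or a compactification argument applies;
\item $\psi\mapsto\log\fX(x;\psi)$ is upper semicontinuous (or continuous);
\item there is a policy-level envelope $\E[\sum_{j\le N}\sup_\psi|\log\fX(X_j;\psi)|]<\infty$, which under Assumption~\ref{ass:exchangeability} means $\E[N\sup_\psi|\log\fX(X_1;\psi)|]<\infty$;
\item $\psi^\star$ is a well-separated unique maximizer of $M$.
\end{itemize}
Under these conditions, a Glivenko--Cantelli (or bracketing) argument applied to the i.i.d.\ policy functions $g(\cdot;\psi)$ gives $\sup_\psi|m^{-1}\sum_i g(Z_i;\psi)-\E g(Z;\psi)|\to0$ a.s. Multiplying by the a.s.\ convergent factor $m/N_\bullet$ preserves the uniform convergence, so $\sup_\psi|M_m-M|\to0$ a.s. Wald's argmax theorem (\citet[Thm.~5.7]{vanDerVaart1998} in its a.s.\ form, or Wald's original argument) then yields $\hat\psi_m\to\psi^\star$ a.s.

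\textbf{Identification and the main obstacle.} The KL characterization comes from rewriting $M(\psi)=-\mathrm{KL}(F_Y\,\|\,\FX(\cdot;\psi))+\int\log f_Y\d F_Y$, which holds whenever $F_Y$ has a density $f_Y$. That density exists because $F_Y$ is a mixture of the absolutely continuous conditional laws in \eqref{eq:FYmixture}. Hence $\psi^\star$ is the KL projection of $F_Y$ onto the fitted family. Consistency holds iff $\psi^\star=\psi_0$. When $F_Y\ne\FX(\cdot;\psi_0)$ the KL divergence at $\psi_0$ is strictly positive, but some other $\psi$ may still be the minimizer, so the claim of inconsistency is only generic, as stated. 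The main obstacle is the uniform step. Claims within a policy are dependent and their number is random, so the usual claim-level i.i.d.\ Glivenko--Cantelli arguments do not apply directly. The fix is to state the envelope at the policy level and to handle the random normalization $N_\bullet$ by the ratio argument already used in Theorem~\ref{thm:FYlimit}.
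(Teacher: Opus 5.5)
Your argument is correct and is essentially the paper's proof. Both treat the criterion as a sum of i.i.d.\ policy-level terms, identify the limit through Corollary~\ref{cor:FYgeneral} with $h=\log\fX(\cdot;\psi)$, obtain uniform convergence from a cluster-level envelope, and conclude with the argmax theorem; normalizing by $N_\bullet$ rather than $m$ only removes the positive factor $\E[N]$ from the limit and leaves the maximizer unchanged.

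One small fix to your last paragraph: inconsistency is only generic because $\psi_0$ itself can still be the KL minimizer even though the divergence there is strictly positive (for instance, when $F_Y$ matches the family's score moments at $\psi_0$). It is not because some other $\psi$ may be the minimizer.
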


\begin{proof}
Applying Corollary~\ref{cor:FYgeneral} to $h(x)=\log \fX(x;\psi)$, the criterion normalized by $m$ converges almost surely, uniformly over the compact $\Psi$ under a standard envelope condition \citep[Chapter~19]{vanDerVaart1998}, to
$$\E[N]\int\log\fX(x;\psi)\d F_Y(x).$$
The argmax theorem for misspecified M-estimators \citep[Chapter~5]{vanDerVaart1998} then yields convergence of any maximizer to the unique maximizer $\psi^\star$ of the limit criterion.
\end{proof}

Corollary~\ref{cor:naiveMLE} covers every procedure whose first stage fits $\FX$ to uncorrected pooled claims. Standard two-stage copula estimators, such as the IFM method of \citet{Joe2005} and the semiparametric pseudo-likelihood and rank-based procedures of \citet{GenestGhoudiRivest1995} and \citet{Tsukahara2005}, were developed for settings with one observation per unit and correctly targeted margins. They are consistent there. Applied to pooled CRM claims, however, their first stage estimates $F_Y$ rather than $\FX$, with an error that does not shrink with the sample size; the size-bias correction we develop below is needed before such methods can be used. The corollary does not apply to procedures that model the observation mechanism correctly, use a conditional severity law, or sample one claim per policy. Copula inference with non-continuous margins raises further identifiability issues; see \citet{NasriRemillard2023}.

The full policy likelihood is therefore impractical, and fitting the severity margin to the pooled claims is inconsistent under dependence. However, we can use the distribution of the pooled claims to estimate the parameters. By Theorem~\ref{thm:FYlimit} this is $F_Y$, a low-dimensional, model-determined law. Whenever the $F_Y$ implied by the model has a tractable form, the pooled claims provide a valid likelihood for exactly the parameters that enter $F_Y$. We build that likelihood in Section~\ref{sec:cl}.

\section{Composite likelihood built on the observed-claim law}\label{sec:cl}

In this section, we develop a composite likelihood estimator that is consistent for the marginal severity law and the low-order dependence parameters. The estimator is based on the two low-dimensional components of the CRM: the count likelihood and the observed-claim likelihood. The former identifies the frequency parameters, and the latter identifies the severity and low-order dependence parameters. The composite likelihood is a sum of these two components, and its maximizer is consistent for the true parameter vector under standard regularity conditions. The asymptotic variance is a Godambe sandwich, which we estimate at the policy level to account for within-policy dependence. A stepwise extension allows estimation of higher-order dependence parameters from conditional margins, with a stacked sandwich that propagates uncertainty from earlier steps.

\subsection{From the full likelihood to low-dimensional components}\label{subsec:clsetup}

Let $\fullparam$ denote the full parameter of a copula-based CRM: frequency parameters for $\FN$, severity parameters for $\FX$, and the dependence parameters of $C_{k+1}$ for every $k\ge 1$. Direct likelihood inference for $\fullparam$ would require the full policy likelihood of Section~\ref{subsec:fulllik}. The CRM instead offers two components that are both directly available from the data and low-dimensional. The claim counts have marginal law $\pN$, undistorted by the observation scheme, and the pooled claims have law $F_Y$ by Theorem~\ref{thm:FYlimit}. When $F_Y$ depends only on a low-dimensional parameter subset, these two components identify that subset on their own, so it can be estimated without specifying or fitting the high-order dependence parameters.

We collect the targeted parameters $\bm{\phi}=(\eta,\psi,\delta) \subset \fullparam$, where $\eta$ parameterizes $\FN(\cdot;\eta)$, $\psi$ parameterizes $\FX(\cdot;\psi)$, and $\delta$ denotes the low-order dependence parameters entering the law of $Y$. Higher-order dependence parameters are handled by the stepwise scheme of Section~\ref{subsec:stepwise}. Throughout, the parameterization is variation-free: $\pN(\cdot;\eta)$ depends only on $\eta$ and $\FX(\cdot;\psi)$ only on $\psi$, so the marginal parameters and the dependence parameters range over a product space: any admissible value of one imposes no restriction on the others, as is standard in copula models.

The model-implied observed-claim law follows from \eqref{eq:FYmixture}:
$$F_Y(x;\bm{\phi}) = \sum_{n=1}^\infty \frac{n\pN(n;\eta)}{\mu_N(\eta)}F_{X\mid N=n}(x;\psi,\delta), \qquad \mu_N(\eta)=\E_\eta[N].$$
Whenever the conditional laws $F_{X\mid N=n}$, $n\ge 1$, are tractable, so is $F_Y$. In many families, including the exchangeable FGM and Sarmanov constructions, $F_{X\mid N=n}$ depends only on the bivariate dependence between $N$ and a single claim, so $F_Y$ involves no dependence parameter beyond $\delta$. We formalize the requirement.

\begin{assumption}[Tractable observed-claim model]\label{ass:FYtractable}
There is a parametric family $\{F_Y(\cdot;\bm{\phi})\}$ such that, under the CRM with parameter $\bm{\phi}_0=(\eta_0,\psi_0,\delta_0)$, the arbitrary observed claim $Y$ has cdf $F_Y(\cdot;\bm{\phi}_0)$. For each $\bm{\phi}$, $F_Y(\cdot;\bm{\phi})$ is absolutely continuous with density $f_Y(\cdot;\bm{\phi})$, and $\bm{\phi}\mapsto f_Y(x;\bm{\phi})$ is differentiable for almost every $x\ge 0$.
\end{assumption}

In the Sarmanov model, $f_Y(\cdot;\bm{\phi})$ is a density on an admissible parameter set, made explicit in Section~\ref{sec:sarmanovmodel}. Families without closed-form conditional laws still admit a numerically evaluated $f_Y$, and the asymptotic theory below applies as long as $f_Y$ is evaluable and smooth.

\subsection{The estimator}\label{subsec:clest}

Replacing the full policy density by the count marginal $\pN$ and the observed-claim law $f_Y$ gives a composite log-likelihood in the counts $N_1,\dots,N_m$ and pooled severities $\{X_{ij}:1\le i\le m,\ 1\le j\le N_i\}$,
\begin{equation}\label{eq:clloglik}
\ell_{\CL}(\bm{\phi}) = \sum_{i=1}^m \log \pN(N_i;\eta) + \sum_{i=1}^m\sum_{j=1}^{N_i}\log f_Y(X_{ij};\bm{\phi}),
\end{equation}
and let $\widehat{\bm{\phi}}_{\CL}$ be a maximizer. The second sum treats each observed claim as an independent draw from $f_Y(\cdot;\bm{\phi})$. It is therefore not the true joint log-density of a policy's claims, since claims within a policy are dependent and their number is random, but the construction does not require it to be. What matters is that each pooled claim has marginal law $F_Y$ by Theorem~\ref{thm:FYlimit}, so \eqref{eq:clloglik} is a valid composite-likelihood criterion: an M-estimation objective over i.i.d.\ policy clusters whose population version is maximized at the true parameter when $f_Y$ is correctly specified; this is the defining logic of composite likelihood \citep{Lindsay1988,CoxReid2004,Varin2008,VarinReidFirth2011,Bhat2014}. Ignoring the within-cluster dependence costs efficiency, not consistency. It does invalidate the usual variance formula: because the composite score aggregates correlated components, the information matrix equality between the expected Hessian and the score variance fails, and the inverse Hessian of $\ell_{\CL}$ does not estimate the variance. The correct variance is the Godambe sandwich below.

\subsection{Asymptotics and the policy-level Godambe information}\label{subsec:godambe}

We write the contribution of a single policy (a cluster) to the composite log-likelihood \eqref{eq:clloglik}, together with its score, as
$$q(Z_i;\bm{\phi}) = \log\pN(N_i;\eta)+\sum_{j=1}^{N_i}\log f_Y(X_{ij};\bm{\phi}),
\qquad U_i(\bm{\phi})=\nabla_{\bm{\phi}} q(Z_i;\bm{\phi}), \qquad i=1,\dots,m. $$
and let $\ell_{\CL,i}(\bm\phi)=q(Z_i;\bm\phi)$. The parameter $\bm{\phi}$ ranges over a compact parameter space $\Phi$, with interior $\mathrm{int}(\Phi)$. We impose standard compactness, smoothness, envelope, identification, and nonsingularity assumptions of the type used for M-estimators \citep[Chapter~5]{vanDerVaart1998}. Because each cluster contains a random number of claim terms, the envelopes are imposed at the policy-cluster level, and the attendant uniform laws of large numbers are understood in the empirical-process sense of \citet[Chapter~19]{vanDerVaart1998}; consistency requires only $\E[N]<\infty$ while asymptotic normality requires $\E[N^2]<\infty$. We refer to these assumptions as the standard M-estimation regularity conditions.

\begin{theorem}\label{thm:CLasymp}
Suppose the policy observations are i.i.d.\ from the CRM with $\bm{\phi}_0\in\mathrm{int}(\Phi)$, and that $f_Y(\cdot;\bm{\phi})$ is correctly specified, identifiable, and regular under the standard M-estimation regularity conditions. Then any maximizer $\widehat{\bm{\phi}}_{\CL}$ of \eqref{eq:clloglik} satisfies $\widehat{\bm{\phi}}_{\CL}\plim\bm{\phi}_0$ and
$$\sqrt{m}(\widehat{\bm{\phi}}_{\CL}-\bm{\phi}_0) \dto \mathcal{N}(0,\Sigma_{\CL}),
\qquad \Sigma_{\CL} = \Hmat(\bm{\phi}_0)^{-1}\Jmat(\bm{\phi}_0)\Hmat(\bm{\phi}_0)^{-1},$$
where
$$ \Hmat(\bm{\phi}) = -\E\left[\nabla^2_{\bm{\phi}}\ell_{\CL,i}(\bm{\phi})\right],
\qquad \Jmat(\bm{\phi}) = \E\left[U_i(\bm{\phi})U_i(\bm{\phi})^\top\right]. $$
\end{theorem}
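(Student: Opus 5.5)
The plan is to treat $\widehat{\bm{\phi}}_{\CL}$ as an M-estimator over the i.i.d.\ policy clusters $Z_1,\dots,Z_m$, with criterion $m^{-1}\sum_{i=1}^m q(Z_i;\bm{\phi})$, and then apply the standard consistency and asymptotic-normality theorems of \citet[Chapter~5]{vanDerVaart1998}. Everything hinges on the population criterion $M(\bm{\phi})=\E[q(Z;\bm{\phi})]$, which I compute first. Corollary~\ref{cor:FYgeneral} with $h(x)=\log f_Y(x;\bm{\phi})$ gives
$$M(\bm{\phi})=\E_{\eta_0}[\log\pN(N;\eta)]+\E[N]\int_0^\infty \log f_Y(x;\bm{\phi})\d F_Y(x;\bm{\phi}_0).$$
The policy-level envelope with $\E[N]<\infty$ supplies the integrability. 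Both terms are expected log-likelihoods under correct specification: the count marginal is undistorted, and $F_Y(\cdot;\bm{\phi}_0)$ is the true observed-claim law by Theorem~\ref{thm:FYlimit} and Assumption~\ref{ass:FYtractable}. Gibbs' inequality therefore shows that each term is separately maximized at the truth, so $M$ is maximized at $\bm{\phi}_0$. The identifiability hypothesis makes this maximizer unique: no $\bm{\phi}\neq\bm{\phi}_0$ can reproduce both $\pN(\cdot;\eta_0)$ and $f_Y(\cdot;\bm{\phi}_0)$.

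For consistency, the uniform law of large numbers over compact $\Phi$ gives $\sup_{\bm{\phi}}|m^{-1}\ell_{\CL}(\bm{\phi})-M(\bm{\phi})|\to0$ a.s. This uses the Glivenko--Cantelli property of $\{q(\cdot;\bm{\phi})\}$, which follows from continuity in $\bm{\phi}$ plus a cluster-level integrable envelope, for instance $\sup_{\bm{\phi}}|\log\pN(N;\eta)|+\sum_{j\le N}\sup_{\bm{\phi}}|\log f_Y(X_j;\bm{\phi})|$. Its expectation is finite by Corollary~\ref{cor:FYgeneral} applied to the supremum. Because $\bm{\phi}_0$ is a unique and well-separated maximizer (compactness plus continuity), the argmax consistency theorem then yields $\widehat{\bm{\phi}}_{\CL}\plim\bm{\phi}_0$.

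For asymptotic normality, I expand the estimating equation $\sum_i U_i(\widehat{\bm{\phi}}_{\CL})=0$, which holds with probability tending to one because $\bm{\phi}_0$ is interior. The key fact is that the score is unbiased, $\E[U_i(\bm{\phi}_0)]=0$. To see this, exchange differentiation and expectation in $M$ (dominated by the envelope). The count part gives the usual zero-mean score. The claim part gives $\E[N]\int\nabla_{\bm{\phi}}\log f_Y\,\d F_Y(\cdot;\bm{\phi}_0)=\E[N]\nabla_{\bm{\phi}}\int f_Y(x;\bm{\phi})\d x|_{\bm{\phi}_0}=0$. Notably, no claim-level independence is used anywhere in this step.

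The main obstacle is the second moment. The vector $U_i(\bm{\phi}_0)$ is a random-length sum of within-cluster dependent terms, and I need $\Jmat(\bm{\phi}_0)<\infty$. By Cauchy--Schwarz, $\|\sum_{j\le N}s(X_j)\|^2\le N\sum_{j\le N}\|s(X_j)\|^2$, where $s=\nabla\log f_Y$. So it suffices that $\E[N\sum_{j\le N}\|s(X_j)\|^2]<\infty$. This follows from $\E[N^2]<\infty$ whenever $\|s\|$ is bounded by an envelope with suitable moments, which is where the regularity conditions are spent. With $\Jmat$ finite, the multivariate CLT over i.i.d.\ policies gives $m^{-1/2}\sum_iU_i(\bm{\phi}_0)\dto\mathcal{N}(0,\Jmat)$.

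It remains to replace the random Hessian by its mean. For the mean-value Hessian at intermediate points, a uniform LLN in a neighbourhood of $\bm{\phi}_0$ combined with consistency gives $-m^{-1}\sum_i\nabla^2 q(Z_i;\tilde{\bm{\phi}})\plim\Hmat(\bm{\phi}_0)$, which is nonsingular by assumption. Slutsky's lemma then delivers $\Hmat^{-1}\Jmat\Hmat^{-1}$. In general $\Hmat\neq\Jmat$ because of the cross-claim covariances inside each $U_i$, and this is exactly why the sandwich, aggregated at the policy level, is the correct variance.
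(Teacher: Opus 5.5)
Your proposal is correct and is essentially the paper's argument. It treats $\widehat{\bm{\phi}}_{\CL}$ as an M-estimator over i.i.d.\ policy clusters with cluster-level envelopes, uses the argmax theorem for consistency, and gets the Godambe sandwich from a Taylor expansion of the first-order condition with the multivariate CLT on policy-aggregated scores. Your identification step through Corollary~\ref{cor:FYgeneral} and the Kullback--Leibler inequality is the argument the paper gives in the remark after the theorem and spells out for the Sarmanov case in Theorem~\ref{thm:sarmCL}.

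Your details on score unbiasedness and the Cauchy--Schwarz bound on $\Jmat$ make explicit what the paper leaves to the regularity conditions. The one imprecision is that $N$ and the $X_j$ are dependent, so finiteness of $\E[N\sum_{j\le N}\|s(X_j)\|^2]$ is a joint moment condition: it holds, e.g., if $\E[N^2]<\infty$ and $\E[\|s(X_1)\|^2\mid N=n]$ is bounded in $n$. It does not follow from $\E[N^2]<\infty$ plus marginal score moments.
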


\begin{proof}
The composite log-likelihood $\ell_{\CL}(\bm{\phi})=\sum_{i=1}^m q(Z_i;\bm{\phi})$ is an M-estimation criterion over i.i.d.\ policy clusters. The standard M-estimation regularity conditions give uniform convergence of $m^{-1}\ell_{\CL}$ to $Q(\bm{\phi})=\E[q(Z;\bm{\phi})]$ over the compact $\Phi$ and identify $\bm{\phi}_0$ as the unique maximizer; consistency follows from the argmax theorem, see \citet[Chapter~5]{vanDerVaart1998} and \citet[Theorem~2.1]{NeweyMcFadden1994}. Asymptotic normality with the Godambe covariance follows by Taylor expansion of the first-order condition around $\bm{\phi}_0$ and the multivariate central limit theorem applied to the i.i.d.\ policy scores; see \citet[Section~3.2]{VarinReidFirth2011}. Because the summand $q(Z;\bm{\phi})$ contains a random number of claim terms, the envelopes are stated at the cluster level, with the supremum inside the cluster sum, as required for the relevant uniform law of large numbers \citep[Chapter~19]{vanDerVaart1998}; and the cluster unit is the policy, so $\Jmat$ is computed from scores that aggregate all claims within a policy.
\end{proof}

Identifiability in Theorem~\ref{thm:CLasymp} is a model-specific requirement. When the cluster-level integrability condition of Corollary~\ref{cor:FYgeneral} holds for $h(x)=\log f_Y(x;\bm{\phi})$, the claim term of the population criterion satisfies
$\E\left[\sum_{j=1}^N \log f_Y(X_j;\bm{\phi})\right] = \E[N]\int \log f_Y(x;\bm{\phi})\d F_Y(x;\bm{\phi}_0)$,
so it is, up to the factor $\E[N]$, the expected log-likelihood of one draw from the true $F_Y$. Identification then reduces to a Kullback--Leibler argument within the family $\{f_Y(\cdot;\bm{\phi})\}$. Theorem~\ref{thm:sarmCL} verifies this identification for the Sarmanov model under standard M-estimation regularity conditions.

The variance in Theorem~\ref{thm:CLasymp} requires care in implementation, because two separate effects make $\Hmat^{-1}$ alone invalid. First, composite likelihood violates the information identity $\Hmat=\Jmat$ even under correct specification of each component, because the composite score aggregates correlated pieces of information. Second, the CRM's variable-length observation scheme adds dependence between the count term and the pooled-claim terms of the same policy. The sandwich captures both effects, provided the scores are aggregated at the policy level: $U_i$ sums the score contributions of all claims of policy $i$ before the outer product is taken. Replacing $U_iU_i^\top$ by a sum of per-claim outer products treats dependent claims as independent, understates $\Jmat$, and yields standard errors that are too small whenever within-policy severities are dependent. Section~\ref{subsec:sim-coherent} quantifies the resulting undercoverage.

Estimation of $\Sigma_{\CL}$ is standard: with
$$\widehat{\Hmat} = -\frac{1}{m}\sum_{i=1}^m \nabla^2_{\bm{\phi}} q(Z_i;\widehat{\bm{\phi}}_{\CL}), \qquad \widehat{\Jmat} = \frac{1}{m}\sum_{i=1}^m U_i(\widehat{\bm{\phi}}_{\CL})U_i(\widehat{\bm{\phi}}_{\CL})^\top,$$
the estimator $\widehat{\Sigma}_{\CL} = \widehat{\Hmat}^{-1}\widehat{\Jmat}\widehat{\Hmat}^{-1}$ is consistent under these regularity conditions, by the uniform law of large numbers and the continuous mapping theorem.

\subsection{Stepwise estimation of higher-order dependence}\label{subsec:stepwise}

The composite likelihood \eqref{eq:clloglik} estimates $(\eta,\psi,\delta)$, the parameters visible in the observed-claim law. When the analysis also requires within-policy dependence parameters, for instance for the variance of $S$ or tail risk measures, we estimate them stepwise from low-dimensional conditional margins. The procedure exploits a structural property of copulas built as subset expansions, such as the FGM and Sarmanov families of Section~\ref{sec:sarmanovmodel}: their low-order margins depend only on the interaction parameters of the retained coordinates.

\begin{assumption}[Tractable low-order margins]\label{ass:loworder}
There exist known functions $h_1$ and $h_2$ such that, under the CRM:
(i) $(N,X_1)$ has pmf--pdf $h_1(n,x;\eta,\psi,\theta_{01})$ for $n\ge 1$ and $x>0$;
(ii) $(N,X_1,X_2)$ has pmf--pdf $h_2(n,x_1,x_2;\eta,\psi,\theta_{01},\theta_{12},\theta_{012})$ for $n\ge 2$ and $x_1,x_2>0$;
(iii) these margins involve no dependence parameters beyond those appearing as arguments of $h_1$ and $h_2$ in (i) and (ii).
\end{assumption}

Assumption~\ref{ass:loworder} holds in the exchangeable FGM and Sarmanov constructions of Section~\ref{sec:sarmanovmodel}, where integrating out a coordinate removes every interaction term that involves it.

We estimate the parameters in three steps, ordered from the lowest-dimensional margin to the highest and mirroring the way actuaries build a model: first the frequency, then the severity together with its dependence on the count, and finally the within-policy severity dependence. Each step reuses the estimates from earlier steps.

In the first step, we fit the frequency model. The marginal law of $N$ involves no severity or dependence parameter, so
$$\widehat{\eta} \in \operatorname*{arg\,max}_{\eta}\sum_{i=1}^m \log \pN(N_i;\eta)$$
is an ordinary MLE, consistent and asymptotically normal under standard count-model regularity \citep[Chapter~5]{vanDerVaart1998}.

In the second step, we estimate the severity margin and the frequency--severity dependence. With $\widehat\eta$ plugged in, we maximize the bivariate composite objective
$$\ell_2(\psi,\theta_{01};\widehat{\eta}) = \sum_{i=1}^m\sum_{j=1}^{N_i}\log h_1(N_i,X_{ij};\widehat{\eta},\psi,\theta_{01}) - \sum_{i=1}^m N_i\log \pN(N_i;\widehat{\eta})$$
over $(\psi,\theta_{01})$. The subtracted term is constant in $(\psi,\theta_{01})$ and only normalizes the objective into a sum of conditional log-densities $\log f_{X\mid N}(X_{ij}\mid N_i)$, which is convenient for composite-likelihood information criteria \citep{VarinVidoni2005} and likelihood-ratio statistics. We note, however, that maximizing the first sum alone gives the same estimating equations. Using every claim in Step~2 exploits all valid instances with at least one claim. This is not the same sampling scheme as taking only the first claim of each policy with $N_i\ge 1$: the pooled claims are weighted by $n\pN(n)$ and have law $F_Y$. It would be possible to develop an estimation procedure using only the first claim; in that case, first claims are weighted by $\pN(n)$ conditional on $N\ge 1$, a different law under frequency--severity dependence. The first-claim design would need its own sampling correction and would discard the remaining claims. Using all of them through the conditional margin $f_{X\mid N}$ gains efficiency, and the induced within-policy dependence in the score is absorbed by the policy-level sandwich.

Steps~1 and~2 are already enough to compute the mean of $S$. Whenever $\E[N]\in(0,\infty)$ and $\E[S]<\infty$, Corollary~\ref{cor:FYgeneral} with $h(x)=x$ gives
$$\E[S]=\E[N]\,\E[Y], \qquad \E[Y]=\int_0^\infty x\,f_Y(x;\bm{\phi})\,\d x,$$
so the fitted count model supplies $\E[N]$ and the fitted observed-claim law supplies $\E[Y]$, with no higher-order dependence parameter involved; in the Sarmanov model both admit closed forms (Proposition~\ref{prop:sarm-mean}).

In the third step, we estimate the within-policy dependence. For $n\ge 2$, let
$$k(x_2\mid n,x_1;\eta,\psi,\theta_{01},\theta_{12},\theta_{012}) =
\frac{h_2(n,x_1,x_2;\eta,\psi,\theta_{01},\theta_{12},\theta_{012})}{h_1(n,x_1;\eta,\psi,\theta_{01})},$$
and, with the first two steps plugged in, we maximize
\begin{equation}\label{eq:step3}
\ell_3(\theta_{12},\theta_{012};\widehat{\eta},\widehat{\psi},\widehat{\theta}_{01}) = \sum_{i: N_i\ge 2}\log k(X_{i2}\mid N_i,X_{i1};\widehat{\eta},\widehat{\psi},\widehat{\theta}_{01},\theta_{12},\theta_{012}).
\end{equation}
In Step~3, we chose not to sample each possible pair of claims, but to focus on the first pair. The target is the dependence parameters that first appear in the trivariate margin $h_2$, and by exchangeability any within-policy pair has the same low-order law, so the first two claims give one valid contribution from each policy with $N_i\ge 2$. Reusing all $\binom{N_i}{2}$ pairs is also possible, but it weights high-count policies quadratically and introduces strongly dependent pair terms. Under correct specification, this changes the composite-likelihood weighting, but often adds little information. We therefore use one pair per eligible policy as the default. Either way, the sandwich variance is valid only when computed at the policy level.

Inference for the stepwise estimator must account for the plug-in uncertainty from earlier steps. We group the parameters by step, $\bm{\theta}^{(1)}=\eta$, $\bm{\theta}^{(2)}=(\psi,\theta_{01})$, $\bm{\theta}^{(3)}=(\theta_{12},\theta_{012})$, write $\bm{\theta}=(\bm{\theta}^{(1)},\bm{\theta}^{(2)},\bm{\theta}^{(3)})$, and stack the policy-level estimating functions of the three steps,
$$U_i(\bm{\theta})=
\begin{pmatrix}
U_i^{(1)}(\bm{\theta}^{(1)})\\
U_i^{(2)}(\bm{\theta}^{(2)};\bm{\theta}^{(1)})\\
U_i^{(3)}(\bm{\theta}^{(3)};\bm{\theta}^{(1)},\bm{\theta}^{(2)})
\end{pmatrix},
\qquad i=1,\dots,m.$$
so that the stepwise estimator solves $\sum_{i=1}^m U_i(\widehat{\bm{\theta}})=0$.

\begin{theorem}\label{thm:stepwise}
Under Assumptions~\ref{ass:exchangeability} and~\ref{ass:loworder}, suppose the data are generated at the true parameter $\bm{\theta}_0$, interior to the parameter space, and the stacked estimating functions are regular under standard M-estimation conditions \citep[Chapter~5]{vanDerVaart1998}. Then the stepwise estimator $\widehat{\bm{\theta}}=(\widehat{\bm{\theta}}^{(1)},\widehat{\bm{\theta}}^{(2)},\widehat{\bm{\theta}}^{(3)})$ satisfies $\widehat{\bm{\theta}}\plim\bm{\theta}_0$ and
$$\sqrt{m}(\widehat{\bm{\theta}}-\bm{\theta}_0) \dto \mathcal{N}(0,\Sigma_{\mathrm{SW}}),$$
with the multi-step sandwich covariance
\begin{equation}\label{eq:SWsandwich}
\Sigma_{\mathrm{SW}} = A^{-1} B (A^{-1})^\top,
\qquad
A=\E[\nabla_{\bm{\theta}}U_i(\bm{\theta}_0)],
\quad
B=\Var(U_i(\bm{\theta}_0)).
\end{equation}
\end{theorem}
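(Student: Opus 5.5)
The plan is to treat the stepwise estimator as a single Z-estimator over i.i.d.\ policy clusters. The stacked estimating function $U_i(\bm\theta)$ depends on $Z_i$ only, so $\widehat{\bm\theta}$ solves $\sum_i U_i(\widehat{\bm\theta})=0$ and the standard theory of \citet[Theorem~5.21 and Section~5.2]{vanDerVaart1998} applies once three things are in place. First, the population estimating function must have a zero at $\bm\theta_0$. Second, the derivative matrix $A$ must be nonsingular. Third, the cluster-level envelope and moment conditions must hold; in particular, $B$ must be finite, which uses $\E[N^2]<\infty$ because $U^{(2)}_i$ is a sum of $N_i$ claim scores.

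\textbf{Unbiasedness, step by step.} For Step~1, $\E[U^{(1)}(\eta_0)]=0$ is the usual score identity for $\pN$.

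For Step~2, I would invoke Corollary~\ref{cor:FYgeneral} in its exchangeable form, with $h$ replaced by the claim-level score. This gives
$$\E\Big[\sum_{j=1}^{N}\nabla\log h_1(N,X_j)\Big]=\sum_{n\ge1}n\pN(n)\,\E\big[\nabla\log h_1(n,X_1)\mid N=n\big].$$
The Corollary is stated for functions of $X_j$ alone, but the same conditioning argument works with $N$ as an extra argument. Assumption~\ref{ass:loworder}(i) says $h_1(n,\cdot)/\pN(n)$ is the conditional density of $X_1$ given $N=n$. The score of $\log h_1$ in $(\psi,\theta_{01})$ equals the score of this conditional density plus the score of $\log\pN(n;\eta)$, and the latter is zero because $\pN$ does not depend on $(\psi,\theta_{01})$ (variation-free). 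Each inner expectation is then a conditional score identity and vanishes, after differentiating under the integral.

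For Step~3, I would condition on $(N,X_1)$ with $N\ge2$. By Assumption~\ref{ass:loworder}(ii), $k$ is the conditional density of $X_2$ given $(N,X_1)$, so its score in $(\theta_{12},\theta_{012})$ has conditional mean zero. Exchangeability (Assumption~\ref{ass:exchangeability}) ensures the first pair is representative and that the marginals $h_1,h_2$ do not depend on claim labels.

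\textbf{Structure of $A$ and the main step.} The stacking is triangular: $U^{(s)}$ depends only on $\bm\theta^{(1)},\dots,\bm\theta^{(s)}$. Hence $A$ is block lower triangular, and its diagonal blocks are the expected Hessians of the three step criteria. Nonsingularity of $A$ therefore reduces to nonsingularity of each diagonal block, which is part of the regularity/identification hypothesis. Consistency follows sequentially. Step~1 is an ordinary MLE. For Step~2, the criterion converges uniformly by the cluster-level ULLN, with $\widehat\eta$ plugged in through continuity, and by the conditional Kullback--Leibler argument it has a unique maximizer at $(\psi_0,\theta_{01,0})$. Step~3 follows the same pattern. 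Asymptotic normality then comes from a mean-value expansion:
$$0=m^{-1/2}\sum_i U_i(\bm\theta_0)+\Big[m^{-1}\sum_i\nabla U_i(\tilde{\bm\theta})\Big]\sqrt m(\widehat{\bm\theta}-\bm\theta_0).$$
The bracketed term converges to $A$ by the ULLN and consistency, and the multivariate CLT applied to the i.i.d.\ $U_i(\bm\theta_0)$ gives $\mathcal N(0,B)$. Slutsky's lemma then yields $A^{-1}B(A^{-1})^\top$. Because $A$ is not symmetric, the transpose must be kept.

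\textbf{Main obstacle.} The delicate part is the Step~2 unbiasedness under size-biased pooling. Naively, one might think the pooled claims require the density $f_Y$, but the conditional density given $N_i$ remains correctly specified whatever the weighting $n\pN(n)$, so each $n$-slice of the score is mean zero. I would write this conditioning argument out explicitly. I would also check that the off-diagonal blocks of $A$, for example $\E[\nabla_\eta U^{(2)}]$, are finite and are exactly what propagates the plug-in uncertainty.
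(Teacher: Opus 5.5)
Your proposal is correct and follows essentially the same route as the paper. Both treat the estimator as a single stacked Z-estimator over i.i.d.\ policy clusters. Both check mean zero step by step: the count score for Step~1; for Step~2, the conditional score identity given $N_i=n$, multiplied by $n$ through exchangeability; for Step~3, conditioning on $(N_i,X_{i1})$ on $\{N_i\ge 2\}$. Both then obtain consistency sequentially and asymptotic normality from a Taylor expansion and the multivariate CLT.

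Your extra remarks make explicit two points the paper leaves inside its regularity conditions. First, nonsingularity of the block lower-triangular $A$ reduces to nonsingularity of its diagonal blocks. Second, finiteness of $B$ uses $\E[N^2]<\infty$.
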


\begin{proof}
The estimator solves the stacked equations $\sum_{i=1}^mU_i(\widehat{\bm{\theta}})=0$. We first verify that the stacked functions have mean zero at $\bm{\theta}_0$. The Step-1 component is the count score, with $\E[\nabla_\eta\log\pN(N;\eta_0)]=0$ by standard likelihood theory. For Step~2, $h_1(n,x)=\pN(n;\eta)f_{X\mid N=n}(x)$, so the $(\psi,\theta_{01})$-score of $\log h_1$ is the score of the conditional density $f_{X\mid N=n}$; by conditional exchangeability,
\begin{align*}
\E\left[U_i^{(2)}(\bm{\theta}^{(2)}_0;\bm{\theta}^{(1)}_0)\mid N_i=n\right]
&= n\int_0^\infty \nabla_{(\psi,\theta_{01})}\log f_{X\mid N=n}(x)\, f_{X\mid N=n}(x)\d x\\
&= n\,\nabla_{(\psi,\theta_{01})}\int_0^\infty f_{X\mid N=n}(x)\d x=0,
\end{align*}
with differentiation under the integral justified by standard smoothness and envelope assumptions for likelihood scores \citep[Chapter~5]{vanDerVaart1998}. For Step~3, $k(\cdot\mid n,x_1)$ is a density in $x_2$ for every $(n,x_1)$, so $\E[U_i^{(3)}\mid N_i,X_{i1}]=0$ on $\{N_i\ge 2\}$ by the same argument; taking expectations gives $\E[U_i(\bm{\theta}_0)]=0$. Consistency is sequential: Step~1 is an MLE, and Steps~2 and~3 inherit consistency through continuity of the plug-in mapping by multi-step extremum-estimator arguments \citep[Section~6]{NeweyMcFadden1994}. Asymptotic normality follows from the multivariate central limit theorem for $m^{-1/2}\sum_{i=1}^m U_i(\bm{\theta}_0)$ and a Taylor expansion of the stacked equations around $\bm{\theta}_0$; the block lower-triangular form of $A$ reflects the dependence of later scores on earlier plug-in estimates.
\end{proof}

The matrix $A$ is block lower-triangular because each step conditions on the estimates of the preceding steps. Treating Step~3 as a standalone likelihood with fixed nuisance parameters ignores this propagation and understates the variability of $(\widehat{\theta}_{12},\widehat{\theta}_{012})$; in implementation, every score contribution is a policy contribution.

Two composite-likelihood constructions now coexist, and their roles differ. The marginal composite likelihood \eqref{eq:clloglik}, built from $f_Y$, estimates $(\eta,\psi,\theta_{01})$ jointly and needs nothing beyond the observed-claim law. The stepwise construction uses conditional low-order margins, $h_1$ in Step~2 and $h_2/h_1$ in Step~3, and additionally estimates $(\theta_{12},\theta_{012})$. In the Sarmanov model, both consistently estimate $(\eta,\psi,\theta_{01})$ under their respective conditions.

\subsection{Tests for dependence parameters}\label{subsec:clrt}

Within the stepwise framework, we test scalar dependence parameters using Wald statistics based on the stacked policy-level sandwich estimator, which accounts for propagated plug-in uncertainty and avoids the nonstandard reference distributions of composite-likelihood ratio statistics \citep{ChandlerBate2007,PaceSalvanSartori2011}. For a scalar component $\gamma$ of $\bm{\theta}$ and null value $\gamma_0$,
\begin{equation}\label{eq:wald}
T_{\mathrm{Wald}} = \frac{(\widehat{\gamma}-\gamma_0)^2}{\widehat{\Sigma}_{\mathrm{SW},\gamma\gamma}/m}
\;\dto\; \chi^2_1 \quad\text{under } H_0,
\end{equation}
where $\widehat{\Sigma}_{\mathrm{SW},\gamma\gamma}$ is the corresponding diagonal entry of the estimated stacked sandwich. The test of no frequency--severity dependence, $H_0\colon\theta_{01}=0$, uses
$$T_{01} = \frac{m\,\widehat{\theta}_{01}^2}{\widehat{\Sigma}_{\mathrm{SW},\theta_{01}\theta_{01}}},$$
and the test of no three-way interaction, $H_0\colon\theta_{012}=0$, uses
$$T_{012} = \frac{m\,\widehat{\theta}_{012}^2}{\widehat{\Sigma}_{\mathrm{SW},\theta_{012}\theta_{012}}}.$$
In moderate samples, or when the estimate sits near the admissibility boundary, we recommend a parametric null bootstrap: fit the model under the null, simulate full policy clusters from the fitted null model, refit by the same stepwise procedure, and compare the observed statistic with the bootstrap distribution. A nonparametric bootstrap that resamples the observed policies would draw from the fitted alternative rather than the null, so its distribution is not a valid null reference for these tests.

Using the $\chi^2_1$ distribution as the null reference in \eqref{eq:wald} also requires the null value $\gamma_0$ to lie in the interior of the admissible set. For the Step-3 estimator the trivariate margin leaves $\theta_{12}$ and $\theta_{012}$ free over intervals containing $0$ in their interior, so the $\chi^2_1$ reference applies to the tests of no severity--severity dependence, $H_0\colon\theta_{12}=0$, and no three-way dependence, $H_0\colon\theta_{012}=0$. The reference fails only when an additional restriction places the null on a boundary. If one requires a projectively coherent model on unbounded count support, coherence forces $\theta_{12}\ge\theta_{01}^2$, so the null $\theta_{12}=0$ is feasible only at $\theta_{01}=0$, where it lies on the boundary. There one can use a parametric bootstrap or a one-sided boundary-corrected reference instead.

\section{The Sarmanov collective risk model}\label{sec:sarmanovmodel}

The composite likelihood needs a model in which $F_Y$ and, for the stepwise extension, the low-order margins $h_1$ and $h_2$ are explicit. The Sarmanov CRM provides all three in closed form. It generalizes the FGM CRM of \citet{blier-wong2024collective} while preserving the two properties the method uses: a subset expansion whose low-order margins involve only low-order parameters, and a latent Bernoulli-mixture representation that produces explicit conditional distributions and admissibility constraints. Sarmanov copulas have been widely used in actuarial science; see \citet{HashorvaRatovomirija2015} for mixed Erlang risks, \citet{AbdallahBoucherCossette2016} for dynamic claim counts, and \citet{BolanceVernic2019} for multivariate count regression.

\subsection{From FGM to Sarmanov}\label{subsec:fgm2sarm}

The FGM copula \citep{Eyraud1936,Morgenstern1956,Gumbel1960,Farlie1960} is the simplest family combining positive and negative dependence with an explicit multivariate subset expansion. In dimension two, its expression is given by
$$C_{\mathrm{FGM}}(u,v)=uv\left(1+\theta(1-u)(1-v)\right),\quad \theta\in[-1,1],$$
and in higher dimensions the family adds one interaction term per coordinate subset, and can be interpreted as a polynomial perturbation of the product copula \citep{Cambanis1977}. The family is also limited: the fixed kernel $u(1-u)$ bounds the attainable strength of dependence and excludes asymmetric shapes.

Sarmanov copulas \citep{Sarmanov1966,Lee1996} remove this restriction by allowing general kernels. We call a bivariate copula Sarmanov if its copula has the form
$$C(u_1,u_2)=u_1u_2 + a\,g_1(u_1)g_2(u_2)$$
for kernel functions $g_1,g_2$ with $g_i(0)=g_i(1)=0$ and a coefficient $a$ chosen so that $C$ is a valid copula. In higher dimensions, one may extend the family via a subset expansion analogous to the multivariate FGM, with general kernel factors. We adopt the formulation of \citet{blier-wong2026stochastic}, which represents these copulas through a stochastic representation based on a latent Bernoulli random vector. The representation supplies both the conditional distributions needed for estimation and explicit admissibility conditions. Appendix~\ref{app:bernoulli-sarmanov} gives the construction; here we record the two facts we use to characterize the stochastic representation of the CRM, its low-order margins, and the admissible parameter set.

First, the copula is a subset expansion. Let $g_0$ be the kernel associated with the count coordinate and $g$ the common kernel associated with each severity coordinate, both vanishing at $0$ and $1$. Each kernel arises from a calibrated pair, that is, a pair of distributions on $[0,1]$ whose mixture with weights $1-\pi$ and $\pi$ is uniform (Definition~\ref{def:calibrated}); the calibration probabilities are $\pi_N$ for the count and $\pi_X$ for the severities. Let $\bm{I}^{(k)}=(I_0,I_1,\dots,I_k)$ be a Bernoulli random vector with means $\pi_0=\pi_N$ and $\pi_r=\pi_X$ for $r\ge 1$, and define, for any subset $S\subseteq\{0,1,\dots,k\}$ with $|S|\ge 2$, the normalized centered mixed moment
\begin{equation}\label{eq:thetaS}
\theta_S = \E\left[\prod_{r\in S}\frac{I_r-\pi_r}{\pi_r}\right].
\end{equation}
Then, for each $k\ge 1$, the copula $C_{k+1}$ linking the count to the first $k$ severities is
\begin{equation}\label{eq:sarmanovSubset}
C_{k+1}(u_0,u_1,\dots,u_k) = \prod_{j=0}^k u_j + \sum_{\substack{S\subseteq\{0,\dots,k\}\\ |S|\ge 2}} \theta_S \left(\prod_{r\notin S}u_r\right) \left(\prod_{r\in S} g_r(u_r)\right),
\end{equation}
where the coordinate $u_0$ is the count and $u_1,\dots,u_k$ are the severities, and $g_r$ is $g_0$ for $r=0$ and $g$ otherwise.

Second, a low-order margin of the copula is again a copula of the same subset-expansion form, involving only the retained coordinates. Setting an omitted coordinate to $1$ eliminates every term whose subset contains it, since $g_r(1)=0$, so the margin depends only on the interaction parameters of the retained coordinates. Under exchangeability the distinct low-order parameters reduce to $\theta_{01}$ (count--severity), $\theta_{12}$ (severity--severity), and $\theta_{012}$ (count and two severities). This is precisely the structure demanded by Assumption~\ref{ass:loworder}, and it makes the stepwise procedure of Section~\ref{subsec:stepwise} feasible: Step~2 depends only $\theta_{01}$, and Step~3 additionally depends on $(\theta_{12},\theta_{012})$.

The endpoint conditions $g_r(0)=g_r(1)=0$ alone do not guarantee a valid copula. Throughout, we take the kernels from calibrated pairs and the interaction parameters from an admissible Bernoulli law, so that every displayed mixed density is nonnegative by the construction of Appendix~\ref{app:bernoulli-sarmanov}.

The Sarmanov CRM embeds \eqref{eq:sarmanovSubset} into the dependent CRM of Section~\ref{sec:background}. Let $N$ have cdf $\FN(\cdot;\eta)$ and pmf $\pN(\cdot;\eta)$, and let $X$ have cdf $\FX(\cdot;\psi)$ with density $\fX(\cdot;\psi)$. Let $U_0\sim\mathrm{Uniform}(0,1)$ be a latent uniform random variable with $N=\FN^{-1}(U_0;\eta)$, where $\FN^{-1}(u_0;\eta)=\inf\{n\in\Nzero:\FN(n;\eta)\ge u_0\}$, and set $U_j=\FX(X_j;\psi)$. For each $k\ge 1$, impose that $(U_0,U_1,\dots,U_k)$ has the copula \eqref{eq:sarmanovSubset} with exchangeable severity coordinates and parameters $(\theta_{01},\theta_{12},\theta_{012},\dots)$.

For the estimation results below, only the low-order Sarmanov margins are required. A full sequence-level Sarmanov CRM additionally requires compatibility of the Bernoulli laws across dimensions. Choosing $g(u)=u(1-u)$ and the calibrated pair $(\mathrm{Beta}(2,1),\mathrm{Beta}(1,2))$ with $\pi_X=1/2$ recovers the FGM CRM. Conditional on the Bernoulli indices, the latent uniform random variables are independent, which gives the closed forms used below and the simulation algorithms.

\subsection{Admissible dependence parameters}\label{subsec:admissible}

Estimation requires knowing which parameter values correspond to valid copulas. In the Bernoulli-mixture representation of the Sarmanov copula, the answer is a nonnegativity statement about the pmf of the latent Bernoulli random vector. For two severity coordinates, write $p_{abc}=\p(I_0=a,I_1=b,I_2=c)$ with the count index first. The full pmf has $2^3=8$ values, and exchangeability in $(I_1,I_2)$ imposes $p_{001}=p_{010}$ and $p_{101}=p_{110},$
leaving six distinct probabilities. Summation to one and the margin constraints $\p(I_0=1)=\pi_N$ and $\p(I_1=1)=\p(I_2=1)=\pi_X$ remove three further degrees of freedom, leaving three free probabilities, which determine the dependence parameters through \eqref{eq:thetaS}:
\begin{align*}
\theta_{01}&=\frac{\Cov(I_0,I_1)}{\pi_N\pi_X},\qquad
\theta_{12}=\frac{\Cov(I_1,I_2)}{\pi_X^2},\qquad
\theta_{012}=\frac{\E[(I_0-\pi_N)(I_1-\pi_X)(I_2-\pi_X)]}{\pi_N\pi_X^2}.
\end{align*}
Call a triple $(\theta_{01},\theta_{12},\theta_{012})$ trivariate-admissible (for fixed calibrated kernel pairs with calibration probabilities $\pi_N$ and $\pi_X$) if the expansion \eqref{eq:sarmanovSubset} with $k=2$ and interaction parameters $\theta_{\{0,1\}}=\theta_{\{0,2\}}=\theta_{01}$, $\theta_{\{1,2\}}=\theta_{12}$, and $\theta_{\{0,1,2\}}=\theta_{012}$ defines a valid copula on $[0,1]^3$. Let $\Theta_{B}^{(3)}$ be the set of triples arising from the moment map \eqref{eq:thetaS} from some Bernoulli law that is exchangeable in $(I_1,I_2)$.

\begin{proposition}\label{prop:admissible}
Fix calibrated kernel pairs with calibration probabilities $\pi_N$ and $\pi_X$. Every triple in $\Theta_{B}^{(3)}$ is trivariate-admissible, and the projection of $\Theta_{B}^{(3)}$ onto its first coordinate, the range of $\theta_{01}$, is the interval
$$ \frac{\max(0,\pi_N+\pi_X-1)-\pi_N\pi_X}{\pi_N\pi_X} \le \theta_{01} \le \frac{\min(\pi_N,\pi_X)-\pi_N\pi_X}{\pi_N\pi_X},$$
which is asymmetric around zero in general and equals $[-1,1]$ when $\pi_N=\pi_X=1/2$.
\end{proposition}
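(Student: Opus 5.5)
The proof has two parts: admissibility of every triple in $\Theta_B^{(3)}$, and the range of $\theta_{01}$. Both rest on the latent Bernoulli-mixture representation of Appendix~\ref{app:bernoulli-sarmanov}.

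\emph{Admissibility.} Fix an exchangeable Bernoulli law $\bm{I}=(I_0,I_1,I_2)$ with margins $\pi_N,\pi_X,\pi_X$. From it we build a random vector $(U_0,U_1,U_2)$ as follows: draw $\bm{I}$; conditional on $\bm{I}$, draw the $U_r$ independently, with $U_r$ taken from the ``$1$'' member of the calibrated pair of coordinate $r$ when $I_r=1$ and from the ``$0$'' member when $I_r=0$.
\begin{itemize}
\item Calibration says the $(1-\pi_r,\pi_r)$ mixture of the pair is uniform, so each $U_r$ is $\mathrm{Uniform}(0,1)$. The joint cdf of $(U_0,U_1,U_2)$ is therefore a genuine copula.
\item Write the pair cdfs as $G_r^{(1)}(u)=u+c_r^{(1)}g_r(u)$ and $G_r^{(0)}(u)=u+c_r^{(0)}g_r(u)$, where calibration forces $(1-\pi_r)c^{(0)}_r+\pi_rc^{(1)}_r=0$. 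Normalize the kernel so that $G^{(I_r)}_r(u)=u+\frac{I_r-\pi_r}{\pi_r}g_r(u)$, the normalization used in the appendix.
\item The joint cdf is $\E\big[\prod_r G_r^{(I_r)}(u_r)\big]=\E\big[\prod_r\{u_r+\tfrac{I_r-\pi_r}{\pi_r}g_r(u_r)\}\big]$. Expanding the product and using that the singleton centered moments vanish reproduces \eqref{eq:sarmanovSubset} for $k=2$, with coefficients given exactly by \eqref{eq:thetaS}.
\end{itemize}
Exchangeability of $(I_1,I_2)$ gives $\theta_{\{0,1\}}=\theta_{\{0,2\}}$. Hence every point of $\Theta_B^{(3)}$ is trivariate-admissible. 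The one thing to check carefully here is that the paper's kernel normalization matches $\frac{I_r-\pi_r}{\pi_r}$; if the appendix fixes a different constant, it is absorbed into $g_r$.

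\emph{Range of $\theta_{01}$.} The pair $(I_0,I_1)$ is a bivariate Bernoulli vector with margins $\pi_N,\pi_X$, and $\theta_{01}=(p_{11}-\pi_N\pi_X)/(\pi_N\pi_X)$ with $p_{11}=\p(I_0=1,I_1=1)$. The four cell probabilities are nonnegative exactly when
$$\max(0,\pi_N+\pi_X-1)\le p_{11}\le\min(\pi_N,\pi_X),$$
which are the Fréchet bounds. This gives the necessity of the stated interval.

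For sufficiency, take any $p_{11}$ in that interval. We must extend $(I_0,I_1)$ to an exchangeable triple with the prescribed bivariate law of $(I_0,I_1)$. Let $I_1$ and $I_2$ be conditionally i.i.d.\ given $I_0$, with $\p(I_1=1\mid I_0=a)$ read off from the chosen bivariate law. This triple is exchangeable in $(I_1,I_2)$, has margin $\pi_X$ for both severity indices, and has $(I_0,I_1)$-marginal equal to the target. So the projection is the whole interval.

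\emph{Endpoint case.} At $\pi_N=\pi_X=1/2$ the bounds are $(0-1/4)/(1/4)=-1$ and $(1/2-1/4)/(1/4)=1$, giving $[-1,1]$. The interval is asymmetric in general; for example, $\pi_N$ small and $\pi_X=1/2$ gives lower bound $-1$ and upper bound $1$ only when $\pi_N=1/2$, and otherwise the two bounds differ in magnitude.

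The main obstacle is the first step: pinning down the exact correspondence between the calibrated-pair cdfs and the kernels $g_r$, so that the expansion's coefficients coincide with \eqref{eq:thetaS}.
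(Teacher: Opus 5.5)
Your argument follows the paper's proof: the Bernoulli-mixture construction gives uniform margins by calibration and a joint cdf equal to the subset expansion with coefficients \eqref{eq:thetaS}, and the range of $\theta_{01}$ comes from the Fréchet bounds on $p_{11}$, with every value attained by taking $I_1,I_2$ conditionally i.i.d.\ given $I_0$. The normalization you call the main obstacle is a one-line check. Calibration together with $g=\pi(F_{[2]}-F_{[1]})$ gives $F_{[1]}(u)=u-g(u)$ and $F_{[2]}(u)=u+\frac{1-\pi}{\pi}g(u)$. This is exactly $u+\frac{I-\pi}{\pi}g(u)$ at $I=0$ and $I=1$, so your expansion matches \eqref{eq:thetaS} with no rescaling.

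One claim is false: your asymmetry example. With $\pi_X=1/2$ the interval is always symmetric. It equals $[-1,1]$ for every $\pi_N\le 1/2$, and $[-(1-\pi_N)/\pi_N,\,(1-\pi_N)/\pi_N]$ for $\pi_N>1/2$. In fact the interval is symmetric about zero exactly when $1/2\in\{\pi_N,\pi_X\}$, so your example lands in the symmetric case. A correct example is $\pi_N=\pi_X=1/4$, which gives $[-1,3]$; another is $\pi_N=\pi_X=3/4$, which gives $[-1/9,1/3]$.
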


\begin{proof}
Given a Bernoulli law for $(I_0,I_1,I_2)$ with the stated margins, the construction of Appendix~\ref{app:bernoulli-sarmanov} produces a vector $(U_0,U_1,U_2)$ with uniform margins whose joint cdf is the subset expansion \eqref{eq:sarmanovSubset} with parameters \eqref{eq:thetaS}; a cdf with uniform margins is a copula, so the triple is trivariate-admissible. For the projection, the bivariate margin of $(I_0,I_1)$ has $p_{11}=\p(I_0=1,I_1=1)$ ranging over the Fr\'echet interval $[\max(0,\pi_N+\pi_X-1),\min(\pi_N,\pi_X)]$, and $\theta_{01}=(p_{11}-\pi_N\pi_X)/(\pi_N\pi_X)$ is increasing and affine in $p_{11}$; every value in the resulting interval is attained, for instance by the law with $(I_1,I_2)$ conditionally i.i.d.\ given $I_0$, which extends any bivariate $(I_0,I_1)$ law to an exchangeable trivariate one.
\end{proof}

For estimation, one may either optimize over $\theta_{01}$ within this interval, or parameterize the Bernoulli probabilities directly and map to $(\theta_{01},\theta_{12},\theta_{012})$, which automatically produces points of $\Theta_{B}^{(3)}$. Proposition~\ref{prop:admissible} is a sufficiency statement: every estimator in this paper is defined on the constructive set $\Theta_{B}^{(3)}$ or its bivariate projection, which is all we require.

\subsection{Closed-form margins and the observed-claim law}\label{subsec:sarm-fy}

This section derives, in closed form under conditional exchangeability, the objects the method uses. The estimators take as input the bivariate margin $h_1$ and the observed-claim law $F_Y$. From the fitted model, we then compute quantities like the moments of $Y$ and $S$. The formulas below are exact: a low-order margin retains only the interaction parameters of its own coordinates, by the marginalization property of Section~\ref{subsec:fgm2sarm}, so the higher-order interactions of the full model do not enter. In particular, $h_1$ and $F_Y$ involve only $\theta_{01}$, and the trivariate margin of Section~\ref{subsec:sarm-stepwise} involves only $(\theta_{01},\theta_{12},\theta_{012})$.

The bivariate margin of \eqref{eq:sarmanovSubset} in the count and a single severity is the bivariate Sarmanov copula
\begin{equation}\label{eq:bivSarm}
C_{2}(u_0,u_1) = u_0u_1 + \theta_{01}\, g_0(u_0)g(u_1).
\end{equation}
For the discrete coordinate, we write $v_n=\FN(n;\eta)$, $v_{n^-}=\FN(n-1;\eta)$, and the discrete kernel difference
$\Delta g_0(n;\eta)=g_0(v_n)-g_0(v_{n^-}).$
Assume the severity kernel $g:[0,1]\to\R$ is absolutely continuous with a.e.\ derivative $g'$ and $g(0)=g(1)=0$.

\begin{proposition}\label{prop:sarm-h1}
For any $n\in\Nzero$ with $\pN(n;\eta)>0$ and any $x\ge 0$, the mixed pmf--pdf of $(N,X_1)$ is
\begin{equation}\label{eq:h1sarm}
h_1(n,x;\eta,\psi,\theta_{01}) = \pN(n;\eta)\fX(x;\psi)\left[1+\theta_{01}A(n;\eta)g'(\FX(x;\psi))\right],
\end{equation}
where $g'$ is the a.e.\ derivative of $g$ and
\begin{equation}\label{eq:ABsarm}
A(n;\eta)=\frac{\Delta g_0(n;\eta)}{\pN(n;\eta)}.
\end{equation}
Consequently, the conditional density of $X_1$ given $N=n$ is
\begin{equation}\label{eq:cond1}
f_{X\mid N=n}(x;\eta,\psi,\theta_{01}) = \fX(x;\psi)\left[1+\theta_{01}A(n;\eta)g'(\FX(x;\psi))\right].
\end{equation}
\end{proposition}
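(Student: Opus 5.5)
The plan is to specialize the general mixed pmf--pdf formula \eqref{eq:mixedlik} with $k=1$ to the bivariate Sarmanov margin \eqref{eq:bivSarm}. First, I record that the bivariate margin of $(N,X_1)$ is governed by $C_2$. By the marginalization property of Section~\ref{subsec:fgm2sarm}, setting $u_2=\dots=u_k=1$ in \eqref{eq:sarmanovSubset} kills every term whose subset contains an omitted coordinate, because $g(1)=0$. What remains is exactly \eqref{eq:bivSarm}, and Kolmogorov consistency guarantees this is the same for every $k$. So
$$\p(N\le n, X_1\le x)=C_2(\FN(n;\eta),\FX(x;\psi)).$$
Taking the finite difference in the count coordinate gives, for $n$ with $\pN(n;\eta)>0$,
$$\p(N=n,X_1\le x)=(v_n-v_{n^-})u+\theta_{01}\,\Delta g_0(n;\eta)\,g(u),\qquad u=\FX(x;\psi).$$
Here $v_n-v_{n^-}=\pN(n;\eta)$, and the convention $v_{-1}=0$ handles $n=0$, since $g_0(0)=0$.

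Second, I differentiate in $x$. Because $g$ is absolutely continuous, $u\mapsto g(u)$ is the integral of $g'$. Since $\FX$ is absolutely continuous with density $\fX$ (and nondecreasing), the composition $x\mapsto g(\FX(x;\psi))$ is absolutely continuous, with a.e.\ derivative $g'(\FX(x;\psi))\fX(x;\psi)$. This is the chain rule for a composition of an absolutely continuous function with a monotone absolutely continuous one. It yields
$$h_1(n,x)=\pN(n;\eta)\fX(x;\psi)+\theta_{01}\Delta g_0(n;\eta)\,g'(\FX(x;\psi))\fX(x;\psi).$$
Factoring out $\pN(n;\eta)\fX(x;\psi)$ and using the definition \eqref{eq:ABsarm} of $A(n;\eta)$ gives \eqref{eq:h1sarm}. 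This density is determined up to a Lebesgue-null set, which is the sense in which ``for any $x$'' is meant. Dividing by $\pN(n;\eta)$ then gives \eqref{eq:cond1}.

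The main obstacle is justifying the chain rule for $g\circ \FX$ when $g'$ exists only almost everywhere: the set where $g'$ fails to exist might have a $\FX$-preimage of positive measure. I would handle this with the change of variables $\int_0^x g'(\FX(t))\fX(t)\d t=\int_0^{\FX(x)}g'(s)\d s=g(\FX(x))$. This substitution is valid for monotone absolutely continuous $\FX$ and integrable $g'$. It shows directly that the displayed expression integrates to the cdf, so it is a version of the density, and no pointwise differentiation is needed. Nonnegativity of $h_1$ is not needed here, since it is inherited from admissibility as discussed after \eqref{eq:sarmanovSubset}.
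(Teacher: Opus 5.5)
Your proposal is correct and follows the paper's route: specialize \eqref{eq:mixedlik} with $k=1$ to the bivariate Sarmanov copula \eqref{eq:bivSarm}, difference in the count coordinate, and differentiate in the severity coordinate. The only differences are that you difference before differentiating, whereas the paper differentiates $C_2$ in $u_1$ first, and that you add the change-of-variables justification of the a.e.\ chain rule for $g\circ\FX$, which the paper leaves implicit in \eqref{eq:mixedlik}.
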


\begin{proof}
Differentiating \eqref{eq:bivSarm} in $u_1$ gives $\partial_{u_1}C_2(u_0,u_1)=u_0+\theta_{01}g_0(u_0)g'(u_1)$ a.e. Differencing at $u_0=\FN(n;\eta)$ and $u_0=\FN(n-1;\eta)$ and applying \eqref{eq:mixedlik} yields \eqref{eq:h1sarm} with $A$ as in \eqref{eq:ABsarm}. That $f_{X\mid N=n}$ integrates to one follows from $\int_0^1 g'(u)\,\d u=g(1)-g(0)=0$.
\end{proof}

The conditional density \eqref{eq:cond1} involves no dependence parameter beyond $\theta_{01}$, and the count enters it only through $A(n;\eta)$. This factor is the count-coordinate analogue of the severity kernel derivative $g'(\FX(x;\psi))$: since $A(n;\eta)=\{g_0(\FN(n;\eta))-g_0(\FN(n-1;\eta))\}/\{\FN(n;\eta)-\FN(n-1;\eta)\}$, it is the mean slope of the count kernel $g_0$ across the $n$th probability interval, the difference-quotient form that a discrete count forces in place of $g_0'$. It measures the frequency-side tilt: a policy with count $n$ has its severity density perturbed in proportion to $\theta_{01}A(n;\eta)$. A policy with $n=0$ reports no claims and so contributes no observed severity; Assumption~\ref{ass:loworder} therefore states $h_1$ only for $n\ge 1$, and all estimation criteria use severity terms only for those policies $i$ with $N_i\ge 1$.

For the observed-claim law, assume $\mu_N(\eta):=\E_\eta[N]\in(0,\infty)$ and $\sum_{n\ge 1} n|\Delta g_0(n;\eta)|<\infty$, so that the series in \eqref{eq:kappaH} converges absolutely.

\begin{proposition}\label{prop:sarm-FY}
For any $x\ge 0$, the density of an arbitrary observed claim is
\begin{equation}\label{eq:fYsarm}
f_Y(x;\eta,\psi,\theta_{01}) = \fX(x;\psi)\left[1+\theta_{01}\,\kappa_N(\eta)\,g'(\FX(x;\psi))\right],
\end{equation}
where
\begin{equation}\label{eq:kappaH}
\kappa_N(\eta)=\sum_{n=1}^\infty \frac{n\pN(n;\eta)}{\mu_N(\eta)}A(n;\eta)
=\frac{\E_\eta[N\,A(N;\eta)]}{\E_\eta[N]}.
\end{equation}
\end{proposition}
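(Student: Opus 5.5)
The argument combines the mixture representation \eqref{eq:FYmixture} of Theorem~\ref{thm:FYlimit} with the closed-form conditional density \eqref{eq:cond1} of Proposition~\ref{prop:sarm-h1}. Under Assumption~\ref{ass:exchangeability}, $F_Y(x)=\sum_{n\ge1} w_n F_{X\mid N=n}(x)$ with weights $w_n=n\pN(n;\eta)/\mu_N(\eta)$, and the conditional density of each component is $\fX(x;\psi)[1+\theta_{01}A(n;\eta)g'(\FX(x;\psi))]$. Summing over $n$ with $\sum_{n\ge1}w_n=1$ gives
$$f_Y(x)=\fX(x;\psi)\Bigl[1+\theta_{01}\,g'(\FX(x;\psi))\sum_{n\ge1}w_nA(n;\eta)\Bigr].$$
The inner sum is $\kappa_N(\eta)$. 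Writing it as $\E_\eta[NA(N;\eta)]/\E_\eta[N]$ is immediate: the $n=0$ term vanishes because of the factor $n$, and $w_nA(n;\eta)=n\Delta g_0(n;\eta)/\mu_N(\eta)$.

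The formal work is to justify moving from cdfs to densities and exchanging sum and integral. I would not differentiate the series. Instead I would integrate: for each $n$, $F_{X\mid N=n}(x)=\int_0^x f_{X\mid N=n}(t)\d t$, and then Fubini--Tonelli gives
$$\sum_n w_n\int_0^x f_{X\mid N=n}(t)\d t=\int_0^x \sum_n w_n f_{X\mid N=n}(t)\d t.$$
This needs the absolute bound
$$\sum_n w_n\,\fX(t)\,\bigl|1+\theta_{01}A(n)g'(\FX(t))\bigr|\le \fX(t)\Bigl(1+|\theta_{01}|\,|g'(\FX(t))|\,\mu_N^{-1}\sum_n n|\Delta g_0(n)|\Bigr).$$
This is where the standing hypothesis $\sum_{n\ge1}n|\Delta g_0(n;\eta)|<\infty$ enters. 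The right-hand side is integrable on $[0,x]$ by the change of variables $u=\FX(t)$, since $\int_0^1|g'(u)|\d u<\infty$ when $g$ is absolutely continuous. The resulting expression is therefore a version of the density of $F_Y$, equal to \eqref{eq:fYsarm} for almost every $x$. The formula can be taken as the definition of $f_Y$ at every $x\ge 0$, with the usual convention for the a.e.\ derivative $g'$.

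A shortcut is also available: apply Corollary~\ref{cor:FYgeneral} with $h=\ind_{[0,x]}$ and the mixture identity, which is the same computation. As a sanity check, $\int f_Y=1$ because $\int_0^\infty \fX(x)g'(\FX(x))\d x=g(1)-g(0)=0$. Nonnegativity of $f_Y$ is inherited from that of the components, because $f_Y$ is a convex combination of densities.

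The only step with real content is the interchange justification, and it is routine given the absolute-summability hypothesis. The remaining identities are algebra.
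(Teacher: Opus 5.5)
Your proposal is correct and follows the paper's proof. You substitute the conditional density \eqref{eq:cond1} into the mixture \eqref{eq:FYmixture} and justify the sum--integral interchange by Fubini, using exactly the paper's bounds $\sum_{n\ge1} n|\Delta g_0(n;\eta)|<\infty$ and $\int_0^1|g'(u)|\d u<\infty$. The only difference is cosmetic: you read the density directly off the integrand, whereas the paper evaluates the integral via $u=\FX(t)$ to obtain the closed-form cdf $F_Y=\FX+\theta_{01}\kappa_N(\eta)\,g(\FX)$, which it reuses later, and then differentiates.
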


\begin{proof}
Substituting \eqref{eq:cond1} into \eqref{eq:FYmixture} gives
$$F_Y(x;\eta,\psi,\theta_{01}) = \sum_{n=1}^\infty \frac{n\pN(n;\eta)}{\mu_N(\eta)} \int_0^x \fX(t;\psi)\left[1+\theta_{01}A(n;\eta)g'(\FX(t;\psi))\right]\d t. $$
Since $\sum_{n=1}^\infty n\pN(n;\eta)/\mu_N(\eta)=1$, only the interchange of sum and integral requires justification, and Fubini's theorem applies because
$$\sum_{n=1}^\infty n|\Delta g_0(n;\eta)|<\infty; \qquad \int_0^x \fX(t;\psi)|g'(\FX(t;\psi))|\d t \le \int_0^1 |g'(u)|\d u<\infty,$$
the last inequality by absolute continuity of $g$. Separating the integral and applying the substitution $u=\FX(t;\psi)$ with $g(0)=0$, which evaluates $\int_0^x \fX(t;\psi)g'(\FX(t;\psi))\,\d t$ as $g(\FX(x;\psi))$, yields the cdf
\begin{equation}\label{eq:FYsarm}
F_Y(x;\eta,\psi,\theta_{01})=\FX(x;\psi)+\theta_{01}\,\kappa_N(\eta)\,g(\FX(x;\psi)).
\end{equation}
Differentiating with respect to $x$ yields \eqref{eq:fYsarm}.
\end{proof}

The decomposition \eqref{eq:FYsarm} separates the roles of the three model components. The frequency model enters only through the scalar $\kappa_N(\eta)$. Writing \eqref{eq:kappaH} as $\kappa_N(\eta)=\E_\eta[N\,A(N;\eta)]/\E_\eta[N]$ shows it is the mean of the frequency tilt $A$ under the size-biased count law $n\pN(n;\eta)/\mu_N(\eta)$; because pooled claims are sampled size-biased by count, it is the average tilt an observed claim experiences, and we call it the size-biased frequency coefficient. It measures how strongly frequency drives the size-bias correction. The severity model sets the shape of the perturbation through $g\circ\FX$. The dependence enters only through the product $\theta_{01}\kappa_N(\eta)$. In particular, $F_Y$ involves neither $\theta_{12}$ nor $\theta_{012}$, so $(\eta,\psi,\theta_{01})$ can be estimated without reference to any higher-order dependence parameter. When $\theta_{01}\kappa_N(\eta)=0$, the correction vanishes and $F_Y=\FX$. Otherwise, depending on the sign of $\theta_{01}\kappa_N(\eta)$, the perturbation shifts mass between small and large claims while preserving total mass, since $\int_0^1 g'(u)\d u=0$.

We can now determine the moments of $Y$ and $S$.

\begin{proposition}\label{prop:sarm-mean}
Assume additionally $\E_\psi[X]<\infty$ and $\E_\psi\left[|Xg'(\FX(X))|\right]<\infty$. Then the observed-claim mean is
$$\E[Y]=\E_\psi[X]+\theta_{01}\frac{\Cov_\eta\left(N,A(N;\eta)\right)}{\E_\eta[N]}\Cov_\psi\left(X,g'(\FX(X))\right),$$
and the aggregate mean is
$$\E[S] = \E_\eta[N]\E[Y] = \E_\eta[N]\E_\psi[X]+\theta_{01}\Cov_\eta\left(N,A(N;\eta)\right)\Cov_\psi\left(X,g'(\FX(X))\right). $$
\end{proposition}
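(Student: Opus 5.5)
We compute $\E[Y]$ by integrating $x$ against the closed-form density \eqref{eq:fYsarm} of Proposition~\ref{prop:sarm-FY}:
$$\E[Y]=\int_0^\infty x\fX(x;\psi)\d x+\theta_{01}\kappa_N(\eta)\int_0^\infty x\fX(x;\psi)g'(\FX(x;\psi))\d x = \E_\psi[X]+\theta_{01}\kappa_N(\eta)\E_\psi\left[Xg'(\FX(X))\right].$$
Both integrals are finite by the two stated integrability assumptions, so splitting the integral is legitimate.

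The second step rewrites each of the two factors of the correction term as a covariance. On the severity side, $U=\FX(X;\psi)$ is uniform because $\FX$ is continuous. Hence $\E_\psi[g'(\FX(X))]=\int_0^1 g'(u)\d u=g(1)-g(0)=0$, and therefore $\E_\psi[Xg'(\FX(X))]=\Cov_\psi(X,g'(\FX(X)))$.

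On the count side, we need $\E_\eta[A(N;\eta)]=0$. Since $\pN(n)A(n)=\Delta g_0(n)$, the sum $\sum_{n\ge0}\pN(n)A(n)$ telescopes to $\lim_{n\to\infty}g_0(\FN(n))-g_0(\FN(-1))$. Using $v_{-1}=0$, $g_0(0)=0$, $\FN(n)\to1$ and $g_0(1)=0$, this limit is zero. Then $\E_\eta[NA(N)]=\Cov_\eta(N,A(N))$, so $\kappa_N(\eta)=\Cov_\eta(N,A(N;\eta))/\E_\eta[N]$, which is the displayed formula for $\E[Y]$.

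The main obstacle is this telescoping step, since it needs $g_0$ to be continuous at $1$. We would take continuity from the calibrated-pair construction (the kernel is a difference of cdfs of distributions on $[0,1]$), or note that only finitely many terms matter when $N$ is bounded. Absolute convergence of the sums is already covered by the standing condition $\sum_n n|\Delta g_0(n)|<\infty$, which gives $\sum_n|\Delta g_0(n)|<\infty$. It is also worth noting that the $n=0$ term $A(0)$ enters $\E_\eta[A(N)]$ but not $\E_\eta[NA(N)]$. This is consistent with the covariance identity: $\Cov(N,A(N))=\E[NA(N)]-\E[N]\E[A(N)]$, and $\E[A(N)]=0$ makes the second term vanish.

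Finally, for $\E[S]$ we apply Corollary~\ref{cor:FYgeneral} with $h(x)=x$, which gives $\E[S]=\E[\sum_{j=1}^N X_j]=\E[N]\E[Y]$. The integrability it requires follows from the finiteness of $\E[Y]$ just obtained: $\E[\sum_{j=1}^N X_j]=\E[N]\int x\d F_Y$, applied to the nonnegative function $h(x)=x$ by monotone convergence. Multiplying the expression for $\E[Y]$ by $\E_\eta[N]$ cancels the denominator and yields the stated formula for $\E[S]$.
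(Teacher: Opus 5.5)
Your proof is correct and is essentially the paper's argument: the same substitution showing $\E_\psi[g'(\FX(X))]=0$, the same telescoping of $\sum_n \Delta g_0(n;\eta)$ giving $\E_\eta[A(N;\eta)]=0$, and the same appeal to Corollary~\ref{cor:FYgeneral} with $h(x)=x$. The only difference is order: you integrate $x$ directly against $f_Y$ from Proposition~\ref{prop:sarm-FY}, whereas the paper first computes $\E[X_1\mid N=n]$ from \eqref{eq:cond1} and then averages over the size-biased count law. One small refinement: left-continuity of $g_0$ at $1$ does not follow merely from $g_0$ being a difference of cdfs on $[0,1]$, since a cdf can have an atom at $1$; it follows from calibration itself, because $(1-\pi_N)F_{0,[1]}(u)+\pi_N F_{0,[2]}(u)=u$ with both cdfs nondecreasing forces both to be continuous.
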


\begin{proof}
Integrating the conditional density \eqref{eq:cond1} against $x$ gives, for every $n$ with $\pN(n;\eta)>0$,
\begin{equation}\label{eq:condmeanX}
\begin{aligned}
\E[X_1\mid N=n]
&=\int_0^\infty x\fX(x;\psi)\left[1+\theta_{01}A(n;\eta)g'(\FX(x;\psi))\right]\d x\\
&=\E_\psi[X]+\theta_{01}A(n;\eta)\E_\psi[Xg'(\FX(X))].
\end{aligned}
\end{equation}
Both correction terms are covariances, because each tilt has mean zero. The substitution $u=\FX(x;\psi)$ gives
$$\E_\psi[g'(\FX(X))]=\int_0^\infty \fX(x;\psi)g'(\FX(x;\psi))\d x=\int_0^1 g'(u)\d u=g(1)-g(0)=0,$$
so $\E_\psi[Xg'(\FX(X))]=\Cov_\psi(X,g'(\FX(X)))$. Similarly, $\pN(n;\eta)A(n;\eta)=\Delta g_0(n;\eta)$ and, using a telescoping sum,
$$\E_\eta[A(N;\eta)]=\sum_{n=0}^{\infty}\Delta g_0(n;\eta)=\lim_{M\to\infty}g_0(\FN(M;\eta))=g_0(1)=0,$$
and $\E_\eta[NA(N;\eta)]=\Cov_\eta(N,A(N;\eta))$. Averaging \eqref{eq:condmeanX} over the size-biased count law of an arbitrary observed claim gives
\begin{align*}
\E[Y] &=\sum_{n=1}^\infty \frac{n\pN(n;\eta)}{\mu_N(\eta)}\E[X_1\mid N=n]=\E_\psi[X]+\theta_{01}\left(\sum_{n=1}^\infty \frac{n\pN(n;\eta)}{\mu_N(\eta)}A(n;\eta)\right)\Cov_\psi(X,g'(\FX(X))),
\end{align*}
and the sum on the right-hand side is $\kappa_N(\eta)$ as in \eqref{eq:kappaH}, which is the stated $\E[Y]$. Finally, Corollary~\ref{cor:FYgeneral} with $h(x)=x$ gives $\E[S]=\E_\eta[N]\E[Y]$. Multiplying the expression for $\E[Y]$ by $\E_\eta[N]$, we obtain the expression for $\E[S]$.
\end{proof}

The correction is a product of three co-movement factors: the frequency--severity dependence $\theta_{01}$, which by \eqref{eq:thetaS} is itself the normalized covariance $\Cov(I_0,I_1)/(\pi_N\pi_X)$ of the latent indicators; the covariance between the count and its frequency tilt $A$; and the covariance between a claim's size and its rank tilt $g'\circ\FX$. If any factor vanishes, in particular if $\theta_{01}=0$, the pooled claims are mean-unbiased for $\FX$ and $\E[S]=\E_\eta[N]\E_\psi[X]$ as under independence.

The marginal $\FX$ is not directly observable from claims because the sampling mechanism determines which claims are observed. The model-implied decomposition \eqref{eq:FYsarm}, however, expresses the observable law $F_Y$ as the true margin $\FX$ plus an explicit dependence correction. Fitting the model-implied $F_Y$ to the pooled claims is therefore the correct way to estimate the true severity margin $\FX$ in the presence of frequency--severity dependence, and the correction allows for estimation of $\FX$ even when the observed claims are not representative of the underlying severity distribution.

\section{Estimation in the Sarmanov model}\label{sec:sarmanovest}

We specialize the estimation theory of Section~\ref{sec:cl} to the Sarmanov CRM, where every quantity the estimator requires is available in closed form. We state the asymptotic result and discuss implementation.

\subsection{The composite likelihood}\label{subsec:sarm-cl}

Substituting the closed form \eqref{eq:fYsarm} into \eqref{eq:clloglik} gives the explicit objective
$$\ell_{\CL}(\eta,\psi,\theta_{01}) = \sum_{i=1}^m \log \pN(N_i;\eta) + \sum_{i=1}^m\sum_{j=1}^{N_i}\left\{\log \fX(X_{ij};\psi) + \log\left[1+\theta_{01}\kappa_N(\eta)g'(\FX(X_{ij};\psi))\right]\right\}.$$
The objective remains low-dimensional regardless of the claim counts: beyond the marginal densities, the only additional quantity is the scalar $\kappa_N(\eta)$.

Because \eqref{eq:fYsarm} involves $\theta_{01}$ only through the product $\theta_{01}\kappa_N(\eta)$, identification requires $\kappa_N(\eta_0)\neq 0$ together with identifiability of the family $(\psi,\theta_{01})\mapsto f_Y(\cdot;\eta_0,\psi,\theta_{01})$. If $\kappa_N(\eta_0)=0$, the pooled claims carry no information about $\theta_{01}$. The following specialization of Theorem~\ref{thm:CLasymp} makes these requirements explicit. Assume the data are generated from the Sarmanov CRM with $\bm{\phi}_0=(\eta_0,\psi_0,\theta_{01,0})$ in the interior of a compact parameter space $\Phi$, with $\E[N]<\infty$, strengthened to $\E[N^2]<\infty$ for asymptotic normality. We note that both moment conditions hold automatically if one assumes a Poisson distribution for the count process. Assume the count family is correctly specified and identifiable, the severity family is correctly specified, the function $A(n;\eta)$ and the kernel derivative $g'(\FX(x;\psi))$ are well defined with $\kappa_N(\eta)\in\R$ on the admissible set. Let
$$q(Z;\eta,\psi,\theta_{01}) = \log \pN(N;\eta)+\sum_{j=1}^N \log f_Y(X_j;\eta,\psi,\theta_{01})$$
and impose the standard M-estimation regularity conditions on $q$, in the sense of \citet[Chapter~5]{vanDerVaart1998}, with the cluster-level envelope and uniform-convergence requirements understood as in \citet[Chapter~19]{vanDerVaart1998}.

\begin{theorem}\label{thm:sarmCL}
Assume additionally that $\kappa_N(\eta_0)\neq 0$ and that $(\psi,\theta_{01})\mapsto f_Y(\cdot;\eta_0,\psi,\theta_{01})$ is identifiable on the admissible parameter set. Then the composite likelihood estimator satisfies $(\widehat{\eta},\widehat{\psi},\widehat{\theta}_{01})\plim (\eta_0,\psi_0,\theta_{01,0})$ and
$$\sqrt{m}
\begin{pmatrix}
\widehat{\eta}-\eta_0\\
\widehat{\psi}-\psi_0\\
\widehat{\theta}_{01}-\theta_{01,0}
\end{pmatrix}
\dto
\mathcal{N}(0,\Sigma_{\CL}),$$
with $\Sigma_{\CL}$ the policy-level Godambe sandwich of Theorem~\ref{thm:CLasymp}.
\end{theorem}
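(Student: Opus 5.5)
The plan is to reduce Theorem~\ref{thm:sarmCL} to Theorem~\ref{thm:CLasymp}. Its hypotheses already include correct specification (Proposition~\ref{prop:sarm-FY} gives $f_Y$ in closed form under the Sarmanov CRM) and the standard regularity conditions imposed on $q$. The only model-specific input still to be supplied is \emph{identification}: the population criterion $Q(\bm\phi)=\E[q(Z;\bm\phi)]$ must have $\bm\phi_0$ as its unique maximizer on $\Phi$. Once that is shown, consistency and the Godambe sandwich limit follow verbatim from Theorem~\ref{thm:CLasymp}. The moment conditions $\E[N]<\infty$ and $\E[N^2]<\infty$ serve as the cluster-level requirements for the ULLN and for the CLT of the policy scores $U_i$, respectively.

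For identification, I split $Q$ into its count part and its claim part. Applying Corollary~\ref{cor:FYgeneral} with $h=\log f_Y(\cdot;\bm\phi)$ (integrable by the envelope condition) gives
$$Q(\bm\phi)=\E_{\eta_0}[\log p_N(N;\eta)]+\E_{\eta_0}[N]\int\log f_Y(x;\bm\phi)\d F_Y(x;\bm\phi_0).$$
By Gibbs' inequality each term is separately maximized at the truth. The first term is maximized uniquely at $\eta=\eta_0$, by identifiability of the count family. The second is maximized exactly on $\{\bm\phi: f_Y(\cdot;\bm\phi)=f_Y(\cdot;\bm\phi_0)\ \text{a.e.}\}$. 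Since $Q(\bm\phi)\le Q(\bm\phi_0)$, with equality requiring equality in both parts, any maximizer must have $\eta=\eta_0$ and $f_Y(\cdot;\eta_0,\psi,\theta_{01})=f_Y(\cdot;\eta_0,\psi_0,\theta_{01,0})$. The assumed identifiability of $(\psi,\theta_{01})\mapsto f_Y(\cdot;\eta_0,\psi,\theta_{01})$ then forces $(\psi,\theta_{01})=(\psi_0,\theta_{01,0})$.

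The role of $\kappa_N(\eta_0)\neq0$ is worth making explicit, since it is the main subtlety. By \eqref{eq:fYsarm}, $f_Y$ depends on $\theta_{01}$ only through the product $\theta_{01}\kappa_N(\eta_0)$. If $\kappa_N(\eta_0)=0$, the map fails to be injective in $\theta_{01}$, and the Hessian $\Hmat(\bm\phi_0)$ has a zero row and column in the $\theta_{01}$ direction. I would therefore use $\kappa_N(\eta_0)\ne0$ a second time, to argue nonsingularity of $\Hmat$. At $\bm\phi_0$, the claim part of $\Hmat$ equals $\E[N]$ times the Fisher information of the family $f_Y(\cdot;\eta_0,\cdot)$. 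The $\theta_{01}$-score is $\kappa_N(\eta_0)g'(F_X)/[1+\theta_{01}\kappa_Ng'(F_X)]$, which is nondegenerate when $\kappa_N\ne0$ and $g'$ is not a.e.\ zero. Local identifiability, meaning linear independence of the scores, is what ultimately carries the nonsingularity of $\Hmat$. I would state that this is covered by the standard nonsingularity condition, rather than proving it family by family.

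The remaining verification is that the second-step score equation has mean zero at $\bm\phi_0$, with the policy as the sampling unit. By Corollary~\ref{cor:FYgeneral} again, $\E[\sum_j\nabla\log f_Y(X_j;\bm\phi_0)]=\E[N]\int\nabla f_Y(x;\bm\phi_0)\d x=0$, using differentiation under the integral. This is justified because $\int f_Y=1$ for every admissible $\bm\phi$, as $\int_0^1 g'=0$. Together with the count score, this gives $\E[U_i(\bm\phi_0)]=0$, so the Taylor expansion and CLT argument of Theorem~\ref{thm:CLasymp} apply. The resulting $\Jmat$ aggregates claims within each policy.
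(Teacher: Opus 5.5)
Your proposal is correct and follows essentially the same route as the paper: reduce to Theorem~\ref{thm:CLasymp}, split $Q$ into the count term and the claim term, and rewrite the claim term with Corollary~\ref{cor:FYgeneral}. Identification then comes from the Kullback--Leibler inequality, count-family identifiability, and the assumed identifiability of $(\psi,\theta_{01})\mapsto f_Y(\cdot;\eta_0,\psi,\theta_{01})$, which is where $\kappa_N(\eta_0)\neq 0$ enters. Your extra remarks on the mean-zero policy score and the nonsingularity of $\Hmat$ go slightly beyond the paper, which leaves both to the standard regularity conditions. One small imprecision, which does not affect your conclusion: since $\kappa_N$ depends on $\eta$, the claim part of $\Hmat(\bm{\phi}_0)$ is $\E[N]$ times the Fisher information of $f_Y$ in all of $\bm{\phi}$. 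The information of $f_Y(\cdot;\eta_0,\cdot)$ is only its $(\psi,\theta_{01})$ block.
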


\begin{proof}
The stated conditions specialize Theorem~\ref{thm:CLasymp}, so only identification requires proof. The population criterion $Q(\bm{\phi})=\E[q(Z;\bm{\phi})]$ of Theorem~\ref{thm:CLasymp} splits into a count term $\E[\log\pN(N;\eta)]$ and a claim term $\E[\sum_{j=1}^N\log f_Y(X_j;\bm{\phi})]$. For the count term, $\E[\log\pN(N;\eta)]$ is uniquely maximized at $\eta_0$ by likelihood identifiability of the count family. For the claim term, Corollary~\ref{cor:FYgeneral} gives
$$\E\left[\sum_{j=1}^N\log f_Y(X_j;\bm{\phi})\right]=\E[N]\int_0^\infty\log f_Y(x;\bm{\phi})\d F_Y(x;\bm{\phi}_0),$$ 
which by the Kullback--Leibler inequality is maximized exactly when $f_Y(\cdot;\bm{\phi})=f_Y(\cdot;\bm{\phi}_0)$ a.e. Since \eqref{eq:fYsarm} involves $\theta_{01}$ only through $\theta_{01}\kappa_N(\eta)$, identification of $\theta_{01}$ requires $\kappa_N(\eta_0)\neq 0$, as assumed; under the assumed identifiability of $(\psi,\theta_{01})\mapsto f_Y(\cdot;\eta_0,\psi,\theta_{01})$, the claim term fixes $(\psi,\theta_{01})$ at the truth when $\eta=\eta_0$. No parameter with $\eta\neq\eta_0$ maximizes $Q$, because the count term is then strictly suboptimal while the claim term cannot exceed its value at the true density.
\end{proof}

The two identification requirements, $\kappa_N(\eta_0)\neq 0$ and identifiability of the family $(\psi,\theta_{01})\mapsto f_Y$, can be verified directly for specific Sarmanov CRMs, in particular for the Poisson--Gamma--FGM family used in the simulations.

\subsection{Implementation}\label{subsec:sarm-impl}

Every quantity in the objective is available in closed form within the Sarmanov CRM from this section, but maximizing it is not a routine likelihood fit. The dependence parameter must stay in an admissible set, the scalar $\kappa_N(\eta)$ is an infinite series, and the objective can be non-convex near the admissibility boundary. We therefore discuss practical implementation, including the choice of starting values and the computation of standard errors and diagnostics. The simulation study of Section~\ref{sec:numericalstudy} uses the choices described here.

The optimizer must first be kept inside the set of valid copulas. A Sarmanov copula is valid only when it assigns nonnegative mass to every hyperrectangle, a condition that is difficult to impose directly in high dimension. The Bernoulli-mixture representation reduces admissibility to nonnegativity of a Bernoulli pmf, a set of linear constraints. For the marginal composite likelihood, which involves only $\theta_{01}$, it suffices to optimize over the interval of Proposition~\ref{prop:admissible}. Once Step~3 or full-policy simulation is required, the triple $(\theta_{01},\theta_{12},\theta_{012})$ must lie in the trivariate Bernoulli image, and a full unbounded model must additionally satisfy the projective-coherence constraint $\theta_{12}\ge\theta_{01}^2$. The most numerically stable option is to parameterize the Bernoulli pmf directly, setting $\p(I_0=1)=\pi_N$ and $\p(I_1=1)=\pi_X$ and representing the joint probabilities $p_{ab}$ subject to $p_{ab}\ge 0$ and $\sum p_{ab}=1$. Then $\theta_{01}$ is a smooth function of linearly constrained probabilities, and the boundary of the admissible range poses no difficulty. This main insight is due to the stochastic representation that turns nonnegativity constraints on the copula into linear constraints on a Bernoulli pmf.

Evaluating the objective raises two numerical questions, one for each factor of \eqref{eq:fYsarm}. The first is the series $\kappa_N(\eta)$ and its derivatives. The definition \eqref{eq:kappaH} contains $A(n;\eta)=\Delta g_0(n;\eta)/\pN(n;\eta)$, which divides by count probabilities that become very small in the tail. Substituting $A$ cancels this denominator, since $n\pN(n;\eta)A(n;\eta)=n\Delta g_0(n;\eta)$, and leaves the numerically stable form
$$\kappa_N(\eta)=\frac{1}{\mu_N(\eta)}\sum_{n=1}^{\infty} n\,\Delta g_0(n;\eta).$$
In practice, we truncate the sum at a finite $n_{\max}$, chosen large enough that the discarded tail $\p_\eta(N>n_{\max})$ is negligible. Because $\kappa_N(\eta)$ enters $\log f_Y$, its derivatives in $\eta$ contribute to the score even within the pooled-claim term. When analytic derivatives of $\mu_N$ and $\Delta g_0$ are unavailable, automatic differentiation could suffice, and the Godambe sandwich variance remains valid as long as scores and Hessians are computed accurately.

The second is the kernel derivative $g'(\FX(x;\psi))$, which needs only the severity cdf and the kernel derivative. In calibrated-pair constructions, $g(u)=\pi(F_{[2]}(u)-F_{[1]}(u))$, so $g'(u)=\pi(f_{[2]}(u)-f_{[1]}(u))$ whenever the pair has densities, as for the classical FGM, Huang--Kotz, and related families; if a closed form is not available, we can tabulate $g'$ on a grid in $[0,1]$ and interpolate.

Even inside the admissible set, the objective can be non-convex in $\theta_{01}$ near the boundary, where $1+\theta_{01}\kappa_N(\eta)g'(\FX(x;\psi))$ approaches zero for some $x$. We therefore select starting values in stages: estimate $\eta$ from the method of moments; initialize $\psi$ by fitting the severity family to the pooled claims (which is biased but could be close to the truth); then optimize jointly over $(\psi,\theta_{01})$. To keep $\theta_{01}$ within its admissible interval throughout the optimization, one can reparameterize by $\theta_{01}=\tanh(\xi)$ and optimize over the unconstrained $\xi$, truncated to the admissible interval, so that the optimizer never leaves the valid region. Otherwise, one may use the Bernoulli pmf parameterization, which is unconstrained in the probabilities but constrained to sum to one and be nonnegative.

After the estimation step, both standard errors and diagnostics must account for the policy clustering. For standard errors, we compute the policy-level gradient $U_i(\widehat{\bm{\phi}})=\nabla q(Z_i;\widehat{\bm{\phi}})$ for each policy $i=1,\dots,m$ and average the outer products, never per-claim, for the reasons of Section~\ref{subsec:godambe}. For model adequacy, we diagnose each component at the level it targets. Randomized quantile residuals \citep{DunnSmyth1996} built from $F_Y(X_{ij};\widehat{\bm{\phi}})$, over all policies $i=1,\dots,m$ and their claims $j=1,\dots,N_i$, assess calibration of the observed-claim law, not of the latent margin $\FX$. Conditional probability integral transforms based on $\widehat F_{X\mid N=N_i}(X_{ij})$, or on the fitted conditional law of $X_2$ given $(N,X_1)$, assess the dependence structure. Because these residuals reuse estimated parameters and policy clusters are internally dependent, we calibrate them by simulating full policy clusters from the fitted model.

\subsection{Estimating the within-policy dependence}\label{subsec:sarm-stepwise}

When within-policy dependence parameters are needed, the Sarmanov structure provides the trivariate margin that Step~3 requires in closed form. Under exchangeability, the trivariate margin of \eqref{eq:sarmanovSubset} in the count and the first two severities is the copula
$$C_3(u_0,u_1,u_2) = u_0u_1u_2 +\theta_{01}g_0(u_0)\left(u_2g(u_1)+u_1g(u_2)\right) +\theta_{12}u_0\, g(u_1)g(u_2) +\theta_{012}g_0(u_0)g(u_1)g(u_2),$$
and differentiating in $(u_1,u_2)$ while differencing in $u_0$ at the mass points of $\FN$, as in \eqref{eq:mixedlik}, gives
\begin{align*}
h_2(n,x_1,x_2)
  &= \pN(n)\fX(x_1)\fX(x_2)\left[1+\theta_{01}A(n)\{g'(\FX(x_1))+g'(\FX(x_2))\}\right. \\
  &\qquad\qquad\qquad\left. +\,\theta_{12}g'(\FX(x_1))g'(\FX(x_2))+\theta_{012}A(n)g'(\FX(x_1))g'(\FX(x_2))\right].
\end{align*}
The conditional density of $X_2$ given $(N,X_1)$ is the ratio $h_2/h_1$, so Step~3 is an ordinary low-dimensional likelihood maximization. Its standard errors are based on the stacked sandwich estimator of Theorem~\ref{thm:stepwise}. We note that treating \eqref{eq:step3} as a standalone likelihood with the earlier-step parameters $(\eta,\psi,\theta_{01})$ held fixed at their estimates would ignore the uncertainty propagated from those steps and would underestimate the standard errors of $(\theta_{12},\theta_{012})$.

Higher-order within-policy dependence parameters can be estimated by continuing in the same way. One would use the margin of $(N,X_1,X_2,X_3)$ through the conditional density of $X_3$ given $(N,X_1,X_2)$, and later steps proceed analogously. As long as the relevant low-order margins stay tractable and the same M-estimation regularity framework of Theorem~\ref{thm:stepwise} applies \citep[Chapter~5]{vanDerVaart1998}, each new step appends its estimating equations to the policy-level stack. The asymptotic covariance remains a Godambe sandwich of the form \eqref{eq:SWsandwich}, with enlarged block matrices that carry the plug-in uncertainty from all earlier steps.

\section{Simulation study}\label{sec:numericalstudy}

The study answers three questions: how large the bias of naive pooled-severity fitting is and whether the composite likelihood removes it; which quantities the naive fit nonetheless estimates well, and why; and whether the policy-level Godambe standard errors and Wald tests behave as the asymptotic theory predicts, including the consequences of clustering scores at the claim level instead. A grid experiment that varies the frequency--severity dependence over a wide range addresses the first two questions. An experiment simulated from a projectively coherent Sarmanov CRM with within-policy dependence addresses the third.

\subsection{Setup}\label{subsec:sim-dgp}

Both experiments in this section use
$N\sim\mathrm{Poisson}(\lambda)$ and $X\sim\mathrm{Gamma}(\alpha,\beta),$
where $\beta$ is a rate parameter, with $(\alpha,\beta)=(2,1)$, FGM kernels $g_0(u)=g(u)=u(1-u)$, and calibration probabilities $\pi_N=\pi_X=1/2$.

We first run a grid experiment to see how the size-bias correction behaves as the frequency, the strength of dependence, and the amount of data change. We therefore vary three design factors. The Poisson mean takes values $\lambda\in\{2,5,10\}$, spanning low- to high-frequency regimes; the dependence parameter takes values $\theta_{01}\in\{-0.5,\,0,\,0.25,\,0.5,\,0.8\};$ portfolio sizes are $m\in\{500,\,2\,000,\,10\,000\}$. Each scenario is replicated $R=1\,000$ times.

We simulate directly from the Bernoulli-mixture construction, so every replication comes from a valid, projectively coherent Sarmanov CRM. For each target $\theta_{01}$ we take $\theta_{12}=|\theta_{01}|$ and $\theta_{012}=0$, a projectively coherent choice since $|\theta_{01}|\ge\theta_{01}^2$. Because $F_Y$ depends only on the bivariate count--severity margin, it varies across the grid only through $\theta_{01}$; this is the law on which we build the composite likelihood.

The grid experiment compares the two estimators. The naive margin-first estimator fits $\FX$ to the pooled claims using Gamma maximum likelihood, then estimates $\theta_{01}$ by IFM, ignoring that the pooled claims sample $F_Y$. The composite likelihood estimator maximizes the objective of Section~\ref{subsec:sarm-cl} jointly over $(\lambda,\alpha,\beta,\theta_{01})$, subject to the admissibility constraint $|\theta_{01}|\le 1$ of Proposition~\ref{prop:admissible}, enforced by imposing simple lower and upper bounds on each of $\log\alpha$, $\log\beta$, and $\theta_{01}$, with three starting values for $\theta_{01}$. The truncation level for $\kappa_N(\eta)$ is chosen so that $\p(N>n_{\max})<10^{-10}$; for $\lambda=10$ this gives $n_{\max}=36$. For the severity margin, we report the median bias of $\hat\alpha$ and $\hat\beta$, that is, the median over all replications of $\hat\alpha-\alpha$ and of $\hat\beta-\beta$, and for the portfolio mean, the median absolute relative error of $\widehat{\E}[S]$.

\subsection{The severity margin: naive bias and its correction}\label{subsec:sim-sizebias}

Figure~\ref{fig:sec7-bias} plots the median bias of the Gamma shape estimate $\hat\alpha$ (left) and rate estimate $\hat\beta$ (right) against $\theta_{01}$ at the largest portfolio size, $m=10\,000$. The naive estimator behaves exactly as predicted by Corollary~\ref{cor:naiveMLE}. Its bias does not shrink with $m$ and follows the sign of the dependence: negative dependence pushes $\hat\alpha$ down and positive dependence pushes it up, because the estimator converges to the Kullback--Leibler projection of $F_Y$, not to the true margin. The bias is strongest when counts are sparse and narrows as $\lambda$ grows, as the naive curves in Figure~\ref{fig:sec7-bias} show; the rate estimate $\hat\beta$ mirrors the pattern with the opposite sign and a smaller magnitude. Table~\ref{tab:sec7-bias-range} records the resulting bias ranges alongside the negligible bias of the composite likelihood. The composite likelihood removes all size-bias distortion from the severity margin. The residual bias also does not increase with the strength of dependence.

The mechanism behind the naive bias is the size-bias perturbation of Proposition~\ref{prop:sarm-FY}. Under positive dependence, high-count policies contribute both more claims and larger claims, so the pooled sample appears heavier-tailed than $\FX$. Fitting a Gamma distribution to the pooled sample compensates by moving $(\hat\alpha,\hat\beta)$ away from the true values. The composite likelihood instead fits $f_Y$, the distribution that the pooled sample actually follows, and recovers the marginal severity parameters from the fitted model.

\begin{figure}[ht]
\centering
\includegraphics[width=\textwidth]{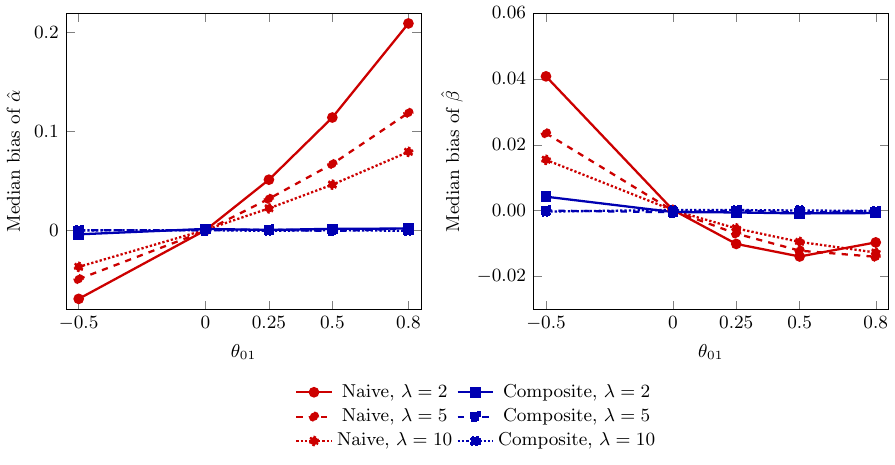}
\caption{Median bias of the Gamma shape estimate $\hat\alpha$ (left) and rate estimate $\hat\beta$ (right) at $m=10\,000$ in the grid experiment. Colour distinguishes the estimator (red: naive margin-first; blue: composite likelihood); line type distinguishes the frequency regime $\lambda\in\{2,5,10\}$.}
\label{fig:sec7-bias}
\end{figure}

\begin{table}[ht]
\centering
\caption{Grid experiment at $m=10\,000$: range over $\theta_{01}$ of the naive median bias and the largest absolute composite-likelihood median bias, for the Gamma shape $\hat\alpha$ and rate $\hat\beta$. Summaries are medians over all replications.}
\label{tab:sec7-bias-range}
\begin{tabular}{ccccc}
\toprule
 & \multicolumn{2}{c}{$\hat\alpha$} & \multicolumn{2}{c}{$\hat\beta$} \\
\cmidrule(lr){2-3}\cmidrule(lr){4-5}
$\lambda$ & Naive bias range & $\max |\text{CL bias}|$ & Naive bias range & $\max |\text{CL bias}|$ \\
\midrule
2  & $[-0.069,\,0.209]$ & 0.0040 & $[-0.014,\,0.041]$ & 0.0042 \\
5  & $[-0.049,\,0.119]$ & 0.0020 & $[-0.014,\,0.023]$ & 0.0006 \\
10 & $[-0.037,\,0.079]$ & 0.0004 & $[-0.013,\,0.015]$ & 0.0003 \\
\bottomrule
\end{tabular}
\end{table}

Although the naive severity margin is inconsistent, we show in Table~\ref{tab:sec7-mean} that both procedures estimate the aggregate mean $\E[S]$ almost identically: they are nearly identical because both estimates of $\E[S]$ are driven by the same count and pooled-claim sample means. The reason is that the pooled claims sample $F_Y$, and $\E[S]=\E[N]\,\E[Y]$, so a plug-in built from the count mean and the pooled-claim mean estimates $\E[S]$ consistently even though it misidentifies $\FX$. However, accuracy on the portfolio mean says nothing about whether the severity law has been estimated correctly. Validating a severity model through its implied $\E[S]$ can therefore accept a severely biased margin, which will have consequences for risk measures that depend on the tail of $\FX$.

Corollary~\ref{cor:naiveMLE} identifies which functionals the naive fit still estimates consistently. The naive estimator converges to the Kullback--Leibler projection $\psi^\star$ of $F_Y$ onto the working severity family, so its limiting score equations hold under $F_Y$, not under $\FX$. In the Gamma shape--rate family, the rate score fixes the fitted mean at $\alpha^\star/\beta^\star=\int x\,\d F_Y(x)=\E[Y]$. In finite samples, the MLE likewise sets the fitted mean equal to the pooled-claim mean. The naive estimate of $\E[S]$ is the product of the count sample mean and this fitted severity mean, so it converges to $\E[N]\,\E[Y]=\E[S]$. The projection matches only the score directions of the fitted family, here $x$ and $\log x$. It does not match $x^2$, tail probabilities, or the within-policy cross-products in $\E[\sum_{1\le j\ne k\le N}X_jX_k]$ that enter $\E[S^2]$ and the law of $S$. The mean is therefore available already from Steps 1 and 2, while the variance and tail risk measures may require the higher-order stepwise components. The exact match between the fitted severity mean and the pooled-claim mean is special to the exponential-family score, while the plug-in from the count and pooled-claim sample means is consistent for $\E[S]$ whatever the fitted family.

\begin{table}[ht]
\centering
\caption{Median absolute relative error of $\widehat{\E}[S]$ in the grid experiment, pooled over all $(\lambda,\theta_{01})$ scenarios and computed over all replications.}
\label{tab:sec7-mean}
\begin{tabular}{ccc}
\toprule
$m$ & Naive & Composite \\
\midrule
500    & 0.0184 & 0.0185 \\
2\,000 & 0.0093 & 0.0093 \\
10\,000 & 0.0042 & 0.0042 \\
\bottomrule
\end{tabular}
\end{table}

\subsection{Inference under a projectively coherent model}\label{subsec:sim-coherent}

The grid experiment showed that the composite likelihood estimates the correct severity margin and removes the size bias from the naive fit. This experiment investigates three things: whether the policy-level Godambe intervals attain their nominal coverage; how much clustering the scores at the claim level, rather than the policy level, costs when the severities within a policy are dependent; and whether the stepwise Wald tests for the higher-order parameters behave as the theory predicts. These require a single model with strong within-policy dependence, so we fix the coherent point $(\theta_{01},\theta_{12},\theta_{012})=(0.36,\,0.36,\,0)$, and simulate with Poisson counts $\lambda\in\{2,10\}$ and $\mathrm{Gamma}(2,1)$ severities. Here $\theta_{12}=0.36$ makes the within-policy dependence strong, which is precisely what makes clustering matter.

For each of $R=5\,000$ replications at $m\in\{2\,000,\,10\,000\}$ we fit two estimators. The $f_Y$ composite likelihood of Section~\ref{sec:cl} estimates the low-order parameters $(\lambda,\alpha,\beta,\theta_{01})$ with its policy-level Godambe sandwich; from it we read the coverage in Table~\ref{tab:sec7-coherent}. The stepwise estimator of Section~\ref{subsec:sarm-stepwise} then estimates the higher-order $(\theta_{12},\theta_{012})$, plugging in the Step-2 $h_1$ fit and using the stacked sandwich of Theorem~\ref{thm:stepwise}; from it we read the $\theta_{12}$ coverage and the two Wald tests. To isolate the clustering question, Table~\ref{tab:sec7-cluster} re-estimates $\theta_{01}$ from $h_1$ alone and sets its policy-level sandwich against a claim-level one that wrongly treats the within-policy claims as independent.

\begin{table}[ht]
\centering
\caption{Coherent-model experiment ($R=5\,000$ replications). Empirical coverage of nominal $95\%$ intervals, empirical size at level $5\%$ of the Wald test of $\theta_{012}=0$ (true null), and empirical power of the Wald test of $\theta_{12}=0$ (false null). The coverage entries for the low-order parameters $(\lambda,\alpha,\beta,\theta_{01})$ come from the $f_Y$ composite likelihood with its policy-level Godambe sandwich; the $\theta_{12}$ coverage and the two Wald tests come from the stepwise estimator with its stacked policy-level sandwich.}
\label{tab:sec7-coherent}
\begin{fitwide}
\begin{tabular}{ccccccccc}
\toprule
 & & \multicolumn{5}{c}{Coverage} & & \\
\cmidrule(lr){3-7}
$\lambda$ & $m$ & $\lambda$ & $\alpha$ & $\beta$ & $\theta_{01}$ & $\theta_{12}$ & size $T_{012}$ & power $T_{12}$ \\
\midrule
2  & 2\,000  & 0.946 & 0.927 & 0.949 & 0.940 & 0.951 & 0.048 & 0.740 \\
2  & 10\,000 & 0.949 & 0.943 & 0.955 & 0.976 & 0.954 & 0.049 & 1.000 \\
10 & 2\,000  & 0.949 & 0.951 & 0.949 & 0.963 & 0.951 & 0.049 & 1.000 \\
10 & 10\,000 & 0.951 & 0.954 & 0.949 & 0.978 & 0.947 & 0.049 & 1.000 \\
\bottomrule
\end{tabular}
\end{fitwide}
\end{table}

\begin{table}[ht]
\centering
\caption{Claim-level versus policy-level Godambe inference for $\theta_{01}$ under the coherent model ($R=5\,000$), for the Step-2 $h_1$ estimator. The $h_1$ score for $\theta_{01}$ is correlated across claims within a policy, so the claim-level sandwich understates the standard error and undercovers, increasingly so as the cluster size grows; the policy-level sandwich restores nominal coverage.}
\label{tab:sec7-cluster}
\begin{fitwide}
\begin{tabular}{ccccc}
\toprule
$\lambda$ & $m$ & cov $\theta_{01}$ (policy) & cov $\theta_{01}$ (claim) & mean SE ratio (claim/policy) \\
\midrule
2  & 2\,000  & 0.948 & 0.929 & 0.93 \\
2  & 10\,000 & 0.955 & 0.937 & 0.93 \\
10 & 2\,000  & 0.951 & 0.839 & 0.72 \\
10 & 10\,000 & 0.949 & 0.843 & 0.72 \\
\bottomrule
\end{tabular}
\end{fitwide}
\end{table}

We draw three conclusions from the experiment. First, on coverage, the Wald intervals in Table~\ref{tab:sec7-coherent} are close to nominal for every reported parameter at both portfolio sizes, ranging from $0.927$ to $0.978$; the lowest value, on the Gamma shape $\alpha$ in the sparse $\lambda=2$, $m=2\,000$ cell, is the familiar mild small-sample undercoverage of a shape parameter. Coverage for $\theta_{01}$ under the $f_Y$ composite likelihood lies between $0.940$ and $0.978$. Second, Table~\ref{tab:sec7-cluster} quantifies the cost of clustering at the wrong level when within-policy dependence is present. The $h_1$ score for $\theta_{01}$ retains within-policy correlation through $A(n;\eta)$, so the claim-level sandwich understates the standard error by about $7\%$ at $\lambda=2$ and $28\%$ at $\lambda=10$; its coverage falls to $0.84$ at $\lambda=10$, and the deficit grows with the cluster size. The policy-level sandwich restores nominal coverage; every column of Table~\ref{tab:sec7-coherent} uses the policy-level sandwich matched to its estimator. Third, the tests behave as predicted. The Wald test of the true null $\theta_{012}=0$ has empirical size between $0.048$ and $0.049$ at the nominal $5\%$ level; the test of the false null $\theta_{12}=0$ has power approaching one everywhere except the $\lambda=2$, $m=2\,000$ cell ($0.740$), where only about $1\,200$ of the $2\,000$ policies contribute a second claim to the trivariate Step-3 component; increasing $\lambda$ or $m$ resolves the deficit.

\section{Conclusion}\label{sec:conclusion}

Under frequency--severity dependence, the claims pooled across an insurance portfolio sample the observed-claim law $F_Y$, a size-biased mixture of the conditional severity distributions, and not the marginal severity law $\FX$. This single sampling fact makes margin-first procedures inconsistent for $\FX$ whenever the severity margin is fitted to uncorrected pooled claims. Having identified the true data-generating mechanism, we can develop a composite likelihood that corrects the size-bias distortion and recovers the marginal severity law. The correction is necessary whenever the pooled claims are used to fit $\FX$, and it is sufficient whenever the model implies a tractable $F_Y$. Its Godambe information is computed at the policy level, because each policy contributes a random cluster of dependent claims, and a stepwise extension estimates within-policy dependence parameters when variance or tail behaviour require them. 

The size-bias argument itself applies to any copula family with computable bivariate conditional laws. In this paper, we illustrate the method with the Sarmanov copula, which is a flexible, tractable, and projectively coherent family that can accommodate nonzero frequency--severity dependence. In this setting, every ingredient is explicit: conditional laws, the observed-claim density, aggregate moments, admissibility constraints, and a projectively coherent construction with nonzero frequency--severity dependence. This allows for efficient composite likelihood estimation and for a stepwise extension that estimates within-policy dependence parameters.

\section*{Acknowledgements}

Theorem~\ref{thm:FYlimit} emerged from conversations with Etienne Marceau on the data-generating mechanism of pooled claims in collective risk models. We acknowledge financial support from the Natural Sciences and Engineering Research Council of Canada (RGPIN-2025-06879).

\appendix
\section{Bernoulli-mixture construction}\label{app:bernoulli-sarmanov}

The Bernoulli-mixture representation of Sarmanov copulas was introduced by \citet{blier-wong2026stochastic}. We adapt it to the CRM setting, where one coordinate is a discrete count, and the rest are continuous severities. The representation serves two purposes here: it constructs the model explicitly, and it yields the admissibility statements of Proposition~\ref{prop:admissible}.

\begin{definition}[$\pi$-calibrated pair]\label{def:calibrated}
Fix $\pi\in(0,1)$. Two distribution functions $F_{[1]}$ and $F_{[2]}$ supported on $[0,1]$ are $\pi$-calibrated if
$$(1-\pi)F_{[1]}(u)+\pi F_{[2]}(u)=u,\quad u\in[0,1].$$
\end{definition}

Calibration ensures that a Bernoulli mixture of $F_{[1]}$ and $F_{[2]}$ leads to a uniform distribution. The associated kernel factor is
$g(u)=\pi(F_{[2]}(u)-F_{[1]}(u)),$ for $u\in[0,1]$,
which satisfies $g(0)=g(1)=0$ and is the perturbation kernel of the subset expansion \eqref{eq:sarmanovSubset}.

We fix $k\ge 1$ and let $\bm{I}=(I_0,I_1,\dots,I_k)\in\{0,1\}^{k+1}$ be a Bernoulli random vector with $\E[I_0]=\pi_N$ and $\E[I_j]=\pi_X$ for $j\ge 1$. Let $(F_{0,[1]},F_{0,[2]})$ be a $\pi_N$-calibrated pair for the count coordinate and $(F_{[1]},F_{[2]})$ a $\pi_X$-calibrated pair for each severity coordinate. With mutually independent components $U_{0,[r]}\sim F_{0,[r]}$ and $U_{j,[r]}\sim F_{[r]}$, $r\in\{1,2\}$, we define
$$U_0=(1-I_0)U_{0,[1]}+I_0U_{0,[2]},\qquad U_j=(1-I_j)U_{j,[1]}+I_jU_{j,[2]},\quad j=1,\dots,k.$$
Then $U_0$ and each $U_j$ are uniformly distributed on $[0,1]$, the count is $N=\FN^{-1}(U_0;\eta)$, the severities are $X_j=\FX^{-1}(U_j;\psi)$, and conditional on $\bm{I}$ the latent uniform random variables are independent, so the copula is determined by the law of $\bm{I}$ and the calibrated kernels; its cdf is exactly the subset expansion \eqref{eq:sarmanovSubset} with parameters \eqref{eq:thetaS}.

The construction proves admissibility in each fixed dimension. For an unbounded CRM, the Bernoulli random vectors across dimensions must be the finite-dimensional margins of one infinite sequence, a projective-coherence requirement. Choosing admissible Bernoulli laws separately dimension by dimension does not define a process $(X_j)_{j\ge 1}$.

Admissibility reduces to nonnegativity of the Bernoulli pmf. In the bivariate margin, with $p_{11}=\p(I_0=1,I_1=1)$, one would require
$$p_{11}\in[\max(0,\pi_N+\pi_X-1),\,\min(\pi_N,\pi_X)], \qquad \theta_{01}=\frac{p_{11}-\pi_N\pi_X}{\pi_N\pi_X},$$
which yields the interval of Proposition~\ref{prop:admissible}. For the trivariate margin under exchangeability of $(I_1,I_2)$, the six distinct atoms and the margin and sum constraints leave three free degrees of freedom; optimizing over those probabilities enforces admissibility of $(\theta_{01},\theta_{12},\theta_{012})$ automatically.

\bibliographystyle{apalike}
\bibliography{ref}

@article{blier-wong2026stochastic,
  title = {Stochastic Representation of {{Sarmanov}} Copulas},
  author = {{Blier-Wong}, Christopher},
  year = 2026,
  month = jan,
  primaryclass = {math},
  publisher = {arXiv},
  doi = {10.48550/arXiv.2601.09016},
  journal = {arXiv:2601.09016}
}

@article{blier-wong2024collective,
  title = {Collective Risk Models with {{FGM}} Dependence},
  author = {{Blier-Wong}, Christopher and Cossette, H{\'e}l{\`e}ne and Marceau, Etienne},
  year = 2025,
  journal = {Scandinavian Actuarial Journal},
  volume = {2025},
  number = {2},
  pages = {139--167},
  publisher = {Taylor \& Francis},
  issn = {0346-1238},
  doi = {10.1080/03461238.2024.2401390}
}

@article{CossetteEtAl2019,
  title = {Collective Risk Models with Dependence},
  author = {Cossette, H{\'e}l{\`e}ne and Marceau, Etienne and Mtalai, Itre},
  year = 2019,
  month = jul,
  journal = {Insurance: Mathematics and Economics},
  volume = {87},
  pages = {153--168},
  issn = {01676687},
  doi = {10.1016/j.insmatheco.2019.04.008}
}

@article{shi2020regression,
  title = {Regression for Copula-Linked Compound Distributions with Applications in Modeling Aggregate Insurance Claims},
  author = {Shi, Peng and Zhao, Zifeng},
  year = 2020,
  month = mar,
  journal = {The Annals of Applied Statistics},
  volume = {14},
  number = {1},
  pages = {357--380},
  issn = {1932-6157},
  doi = {10.1214/19-AOAS1299}
}

@article{Sklar1959,
  author = {Sklar, Abe},
  title = {Fonctions de r{\'e}partition {\`a} $n$ dimensions et leurs marges},
  journal = {Publications de l'Institut de Statistique de l'Universit{\'e} de Paris},
  volume = {8},
  pages = {229--231},
  year = {1959},
  language = {French}
}

@article{GenestNeslehova2007,
  author = {Genest, Christian and Ne{\v{s}}lehov{\'a}, Johanna},
  title = {A Primer on Copulas for Count Data},
  journal = {ASTIN Bulletin},
  volume = {37},
  number = {2},
  pages = {475--515},
  year = {2007},
  doi = {10.2143/AST.37.2.2024077}
}

@article{GenestGhoudiRivest1995,
  author = {Genest, Christian and Ghoudi, Kamel and Rivest, Louis-Paul},
  title = {A semiparametric estimation procedure of dependence parameters in multivariate families of distributions},
  journal = {Biometrika},
  volume = {82},
  number = {3},
  pages = {543--552},
  year = {1995},
  doi = {10.1093/biomet/82.3.543}
}

@article{Joe2005,
  author = {Joe, Harry},
  title = {Asymptotic efficiency of the two-stage estimation method for copula-based models},
  journal = {Journal of Multivariate Analysis},
  volume = {94},
  number = {2},
  pages = {401--419},
  year = {2005},
  doi = {10.1016/j.jmva.2004.06.003}
}

@article{Tsukahara2005,
  author = {Tsukahara, Hideatsu},
  title = {Semiparametric estimation in copula models},
  journal = {The Canadian Journal of Statistics},
  volume = {33},
  number = {3},
  pages = {357--375},
  year = {2005},
  doi = {10.1002/cjs.5540330304}
}

@article{Vardi1982,
  author = {Vardi, Yehuda},
  title = {Nonparametric estimation in the presence of length bias},
  journal = {The Annals of Statistics},
  volume = {10},
  number = {2},
  pages = {616--620},
  year = {1982},
  doi = {10.1214/aos/1176345802}
}

@article{Vardi1985,
  author = {Vardi, Yehuda},
  title = {Empirical distributions in selection bias models},
  journal = {The Annals of Statistics},
  volume = {13},
  number = {1},
  pages = {178--203},
  year = {1985},
  doi = {10.1214/aos/1176346585}
}

@article{Godambe1960,
  author = {Godambe, V. P.},
  title = {An Optimum Property of Regular Maximum Likelihood Estimation},
  journal = {The Annals of Mathematical Statistics},
  volume = {31},
  number = {4},
  pages = {1208--1211},
  year = {1960},
  doi = {10.1214/aoms/1177705693}
}

@article{GodambeHeyde1987,
  author = {Godambe, V. P. and Heyde, C. C.},
  title = {Quasi-Likelihood and Optimal Estimation},
  journal = {International Statistical Review},
  volume = {55},
  number = {3},
  pages = {231--244},
  year = {1987},
  doi = {10.2307/1403403}
}

@article{Varin2008,
  author = {Varin, Cristiano},
  title = {On composite marginal likelihoods},
  journal = {AStA Advances in Statistical Analysis},
  volume = {92},
  number = {1},
  pages = {1--28},
  year = {2008},
  doi = {10.1007/s10182-008-0060-7}
}

@article{VarinReidFirth2011,
  author = {Varin, Cristiano and Reid, Nancy and Firth, David},
  title = {An overview of composite likelihood methods},
  journal = {Statistica Sinica},
  volume = {21},
  number = {1},
  pages = {5--42},
  year = {2011},
  url = {https://www3.stat.sinica.edu.tw/statistica/j21n1/j21n11/j21n11.html}
}

@article{VarinVidoni2005,
  author = {Varin, Cristiano and Vidoni, Paolo},
  title = {A note on composite likelihood inference and model selection},
  journal = {Biometrika},
  volume = {92},
  number = {3},
  pages = {519--528},
  year = {2005},
  doi = {10.1093/biomet/92.3.519}
}

@article{DunnSmyth1996,
  author = {Dunn, Peter K. and Smyth, Gordon K.},
  title = {Randomized Quantile Residuals},
  journal = {Journal of Computational and Graphical Statistics},
  volume = {5},
  number = {3},
  pages = {236--244},
  year = {1996},
  doi = {10.1080/10618600.1996.10474708}
}

@article{LiuWang2017,
  author = {Liu, Hanming and Wang, Ruodu},
  title = {Collective risk model with dependence uncertainty},
  journal = {ASTIN Bulletin},
  volume = {47},
  number = {2},
  pages = {361--389},
  year = {2017},
  doi = {10.1017/asb.2017.4}
}

@article{CzadoEtAl2012,
  author = {Czado, Claudia and Kastenmeier, Rainer and Brechmann, Eike Christian and Min, Aleksey},
  title = {A mixed copula model for insurance claims and claim sizes},
  journal = {Scandinavian Actuarial Journal},
  year = {2012},
  volume = {2012},
  number = {4},
  pages = {278--305},
  doi = {10.1080/03461238.2010.546147}
}

@article{KramerEtAl2013,
  author = {Kr{\"a}mer, Nicole and Brechmann, Eike C. and Silvestrini, Daniel and Czado, Claudia},
  title = {Total loss estimation using copula-based regression models},
  journal = {Insurance: Mathematics and Economics},
  volume = {53},
  number = {3},
  pages = {829--839},
  year = {2013},
  doi = {10.1016/j.insmatheco.2013.09.003}
}

@article{ShiFengIvantsova2015,
  author = {Shi, Peng and Feng, Xiaoping and Ivantsova, Anastasia},
  title = {Dependent frequency--severity modeling of insurance claims},
  journal = {Insurance: Mathematics and Economics},
  volume = {64},
  pages = {417--428},
  year = {2015},
  doi = {10.1016/j.insmatheco.2015.07.006}
}

@article{GarridoGenestSchulz2016,
  author = {Garrido, Jos{\'e} and Genest, Christian and Schulz, Juliane},
  title = {Generalized linear models for dependent frequency and severity of insurance claims},
  journal = {Insurance: Mathematics and Economics},
  volume = {70},
  pages = {205--215},
  year = {2016},
  doi = {10.1016/j.insmatheco.2016.06.006}
}

@article{OhAhnLee2021,
  author = {Oh, Rosy and Ahn, Jae Youn and Lee, Woojoo},
  title = {On copula-based collective risk models: From elliptical copulas to vine copulas},
  journal = {Scandinavian Actuarial Journal},
  year = {2021},
  volume = {2021},
  number = {1},
  pages = {1--33},
  doi = {10.1080/03461238.2020.1768889}
}

@article{Cambanis1977,
  author = {Cambanis, Stamatis},
  title = {Some properties and generalizations of multivariate {Eyraud--Gumbel--Morgenstern} distributions},
  journal = {Journal of Multivariate Analysis},
  volume = {7},
  number = {4},
  pages = {551--559},
  year = {1977},
  doi = {10.1016/0047-259X(77)90066-5}
}

@article{Sarmanov1966,
  author = {Sarmanov, O. V.},
  title = {Generalized normal correlation and two-dimensional {Fr{\'e}chet} classes},
  journal = {Doklady Akademii Nauk SSSR},
  volume = {168},
  number = {1},
  pages = {32--35},
  year = {1966},
  url = {https://www.mathnet.ru/eng/dan32257},
  language = {Russian}
}

@incollection{NeweyMcFadden1994,
  author = {Newey, Whitney K. and McFadden, Daniel},
  title = {Large sample estimation and hypothesis testing},
  booktitle = {Handbook of Econometrics},
  volume = {4},
  pages = {2111--2245},
  year = {1994},
  publisher = {Elsevier},
  doi = {10.1016/S1573-4412(05)80005-4}
}

@book{vanDerVaart1998,
  author    = {van der Vaart, A. W.},
  title     = {Asymptotic Statistics},
  year      = {1998},
  publisher = {Cambridge University Press},
  address   = {Cambridge},
  series    = {Cambridge Series in Statistical and Probabilistic Mathematics}
}

@article{FreesValdez2008,
  author = {Frees, Edward W. and Valdez, Emiliano A.},
  title = {Hierarchical insurance claims modeling},
  journal = {Journal of the American Statistical Association},
  volume = {103},
  number = {484},
  pages = {1457--1469},
  year = {2008},
  doi = {10.1198/016214508000000823}
}

@article{VernicBolanceAlemany2022,
  author = {Vernic, Raluca and Bolanc{\'e}, Catalina and Alemany, Ram{\'o}n},
  title = {Sarmanov distribution for modeling dependence between the frequency and the average severity of insurance claims},
  journal = {Insurance: Mathematics and Economics},
  volume = {102},
  pages = {111--125},
  year = {2022},
  doi = {10.1016/j.insmatheco.2021.12.001}
}

@article{ShiFengBoucher2016,
  author = {Shi, Peng and Feng, Xiaoping and Boucher, Jean-Philippe},
  title = {Multilevel modeling of insurance claims using copulas},
  journal = {The Annals of Applied Statistics},
  volume = {10},
  number = {2},
  pages = {834--863},
  year = {2016},
  doi = {10.1214/16-AOAS914}
}

@article{YangShi2019,
  author = {Yang, Lu and Shi, Peng},
  title = {Multiperil rate making for property insurance using longitudinal data},
  journal = {Journal of the Royal Statistical Society: Series A},
  volume = {182},
  number = {2},
  pages = {647--668},
  year = {2019},
  doi = {10.1111/rssa.12419}
}

@book{KlugmanPanjerWillmot2012,
  author = {Klugman, Stuart A. and Panjer, Harry H. and Willmot, Gordon E.},
  title = {Loss Models: From Data to Decisions},
  edition = {4th},
  year = {2012},
  publisher = {John Wiley \& Sons},
  address = {Hoboken, NJ}
}

@book{KaasGoovaertsDhaeneDenuit2008,
  author = {Kaas, Rob and Goovaerts, Marc and Dhaene, Jan and Denuit, Michel},
  title = {Modern Actuarial Risk Theory: Using {R}},
  edition = {2nd},
  year = {2008},
  publisher = {Springer},
  address = {Berlin},
  doi = {10.1007/978-3-540-70998-5}
}

@book{DenuitDhaeneGoovaertsKaas2005,
  author = {Denuit, Michel and Dhaene, Jan and Goovaerts, Marc and Kaas, Rob},
  title = {Actuarial Theory for Dependent Risks: Measures, Orders and Models},
  year = {2005},
  publisher = {John Wiley \& Sons},
  address = {Chichester}
}

@article{PatilRao1978,
  author = {Patil, G. P. and Rao, C. R.},
  title = {Weighted distributions and size-biased sampling with applications to wildlife populations and human families},
  journal = {Biometrics},
  volume = {34},
  number = {2},
  pages = {179--189},
  year = {1978},
  doi = {10.2307/2530008}
}

@article{Rao1965,
  author = {Rao, C. R.},
  title = {On discrete distributions arising out of methods of ascertainment},
  journal = {Sankhy{\=a}: The Indian Journal of Statistics, Series A},
  volume = {27},
  number = {2/4},
  pages = {311--324},
  year = {1965}
}

@article{GillVardiWellner1988,
  author = {Gill, Richard D. and Vardi, Yehuda and Wellner, Jon A.},
  title = {Large sample theory of empirical distributions in biased sampling models},
  journal = {The Annals of Statistics},
  volume = {16},
  number = {3},
  pages = {1069--1112},
  year = {1988},
  doi = {10.1214/aos/1176350948}
}

@article{Denuit2019,
  author = {Denuit, Michel},
  title = {Size-biased transform and conditional mean risk sharing, with application to {P2P} insurance and tontines},
  journal = {ASTIN Bulletin},
  volume = {49},
  number = {3},
  pages = {591--617},
  year = {2019},
  doi = {10.1017/asb.2019.24}
}

@article{ArratiaGoldsteinKochman2019,
  author = {Arratia, Richard and Goldstein, Larry and Kochman, Fred},
  title = {Size bias for one and all},
  journal = {Probability Surveys},
  volume = {16},
  pages = {1--61},
  year = {2019},
  doi = {10.1214/13-PS221}
}

@article{NasriRemillard2023,
  title = {Identifiability and inference for copula-based semiparametric models for random vectors with arbitrary marginal distributions},
  author = {Nasri, Bouchra R. and R{\'e}millard, Bruno N.},
  year = {2023},
  month = may,
  primaryclass = {stat.ME},
  publisher = {arXiv},
  doi = {10.48550/arXiv.2301.13408},
  journal = {arXiv:2301.13408}
}

@inproceedings{Lindsay1988,
  author = {Lindsay, Bruce G.},
  title = {Composite likelihood methods},
  booktitle = {Statistical Inference from Stochastic Processes},
  editor = {Prabhu, N. U.},
  series = {Contemporary Mathematics},
  volume = {80},
  pages = {221--239},
  year = {1988},
  publisher = {American Mathematical Society},
  doi = {10.1090/conm/080/999014}
}

@article{CoxReid2004,
  author = {Cox, D. R. and Reid, Nancy},
  title = {A note on pseudolikelihood constructed from marginal densities},
  journal = {Biometrika},
  volume = {91},
  number = {3},
  pages = {729--737},
  year = {2004},
  doi = {10.1093/biomet/91.3.729}
}

@article{Bhat2014,
  author = {Bhat, Chandra R.},
  title = {The composite marginal likelihood ({CML}) inference approach with applications to discrete and mixed dependent variable models},
  journal = {Foundations and Trends in Econometrics},
  volume = {7},
  number = {1},
  pages = {1--117},
  year = {2014},
  doi = {10.1561/0800000022}
}

@article{PaceSalvanSartori2011,
  author = {Pace, Luigi and Salvan, Alessandra and Sartori, Nicola},
  title = {Adjusting composite likelihood ratio statistics},
  journal = {Statistica Sinica},
  volume = {21},
  number = {1},
  pages = {129--148},
  year = {2011}
}

@article{ChandlerBate2007,
  author = {Chandler, Richard E. and Bate, Steven},
  title = {Inference for clustered data using the independence loglikelihood},
  journal = {Biometrika},
  volume = {94},
  number = {1},
  pages = {167--183},
  year = {2007},
  doi = {10.1093/biomet/asm015}
}

@article{Eyraud1936,
  author = {Eyraud, Henri},
  title = {Les principes de la mesure des corr{\'e}lations},
  journal = {Annales de l'Universit{\'e} de Lyon, 3e s{\'e}rie, Section A, Sciences math{\'e}matiques et astronomie},
  pages = {30--47},
  year = {1936}
}

@article{Morgenstern1956,
  author = {Morgenstern, Dietrich},
  title = {{Einfache Beispiele zweidimensionaler Verteilungen}},
  journal = {Mitteilungsblatt f{\"u}r Mathematische Statistik},
  volume = {8},
  pages = {234--235},
  year = {1956}
}

@article{Gumbel1960,
  author = {Gumbel, Emil J.},
  title = {Bivariate exponential distributions},
  journal = {Journal of the American Statistical Association},
  volume = {55},
  number = {292},
  pages = {698--707},
  year = {1960},
  doi = {10.1080/01621459.1960.10483368}
}

@article{Farlie1960,
  author = {Farlie, Dennis J. G.},
  title = {The performance of some correlation coefficients for a general bivariate distribution},
  journal = {Biometrika},
  volume = {47},
  number = {3/4},
  pages = {307--323},
  year = {1960},
  doi = {10.2307/2333302}
}

@article{Lee1996,
  author = {Lee, Mei-Ling Ting},
  title = {Properties and applications of the {Sarmanov} family of bivariate distributions},
  journal = {Communications in Statistics---Theory and Methods},
  volume = {25},
  number = {6},
  pages = {1207--1222},
  year = {1996},
  doi = {10.1080/03610929608831759}
}

@article{BolanceVernic2019,
  author = {Bolanc{\'e}, Catalina and Vernic, Raluca},
  title = {Multivariate count data generalized linear models: Three approaches based on the {Sarmanov} distribution},
  journal = {Insurance: Mathematics and Economics},
  volume = {85},
  pages = {89--103},
  year = {2019},
  doi = {10.1016/j.insmatheco.2019.01.001}
}

@article{BolanceVernic2020,
  author = {Bolanc{\'e}, Catalina and Vernic, Raluca},
  title = {Frequency and severity dependence in the collective risk model: An approach based on {Sarmanov} distribution},
  journal = {Mathematics},
  volume = {8},
  number = {9},
  pages = {1400},
  year = {2020},
  doi = {10.3390/math8091400}
}

@article{AbdallahBoucherCossette2016,
  author = {Abdallah, Anas and Boucher, Jean-Philippe and Cossette, H{\'e}l{\`e}ne},
  title = {Sarmanov family of multivariate distributions for bivariate dynamic claim counts model},
  journal = {Insurance: Mathematics and Economics},
  volume = {68},
  pages = {120--133},
  year = {2016},
  doi = {10.1016/j.insmatheco.2016.01.003}
}

@article{HashorvaRatovomirija2015,
  author = {Hashorva, Enkelejd and Ratovomirija, Gildas},
  title = {On {Sarmanov} mixed {Erlang} risks in insurance applications},
  journal = {ASTIN Bulletin},
  volume = {45},
  number = {1},
  pages = {175--205},
  year = {2015},
  doi = {10.1017/asb.2014.24}
}

@book{Nelsen2006,
  author = {Nelsen, Roger B.},
  title = {An Introduction to Copulas},
  edition = {2nd},
  year = {2006},
  publisher = {Springer},
  address = {New York},
  doi = {10.1007/0-387-28678-0}
}

@article{CossetteGadouryMarceauRobert2019,
  author = {Cossette, H{\'e}l{\`e}ne and Gadoury, Simon-Pierre and Marceau, Etienne and Robert, Christian Y.},
  title = {Composite likelihood estimation method for hierarchical {Archimedean} copulas defined with multivariate compound distributions},
  journal = {Journal of Multivariate Analysis},
  volume = {172},
  pages = {59--83},
  year = {2019},
  doi = {10.1016/j.jmva.2019.01.003}
}

\end{document}